\documentclass[10pt,prx,twocolumn,superscriptaddress,aps,longbibliography]{revtex4-2}

\usepackage[T1]{fontenc}
\usepackage[british]{babel}
\usepackage{amsmath}
\usepackage{amssymb}
\usepackage{amsthm}
\usepackage{graphicx}
\usepackage{bm}
\usepackage{booktabs}
\usepackage{multirow}
\usepackage{tikz}
\usetikzlibrary{arrows.meta, positioning, calc}
\usepackage[colorlinks=true, allcolors=blue]{hyperref}

\newcommand{\FAB}{F_{\!AB}}
\newcommand{\FAE}{F_{\!AE}}
\newcommand{\Holevo}{\chi}
\providecommand{\ket}[1]{\ensuremath{\left|#1\right\rangle}}
\providecommand{\bra}[1]{\ensuremath{\left\langle#1\right|}}
\DeclareMathOperator{\Tr}{Tr}
\newtheorem{proposition}{Proposition}

\begin{document}

\title{Learnt Attacks on Quantum Key Distribution under\texorpdfstring{\\}{ }%
Channel Noise and Device Drift}

\author{Marcel Mordarski\textsuperscript{\dag}}
\email{marcel.mordarski25@imperial.ac.uk}
\affiliation{Department of Computing, Imperial College London, London SW7 2AZ, United Kingdom}

\author{Benjamin Gras}
\thanks{These authors contributed equally to this work.}
\author{Abdelrahman Shehata\textsuperscript{\dag}}
\author{Daniel Budina\textsuperscript{\dag}}
\affiliation{Department of Electrical and Electronic Engineering, Imperial College London, London SW7 2AZ, United Kingdom}

\author{Roberto Bondesan}
\affiliation{Department of Computing, Imperial College London, London SW7 2AZ, United Kingdom}

\begin{abstract}
Quantum key distribution (QKD) links are provisioned from security analyses of stationary channels, whereas the devices that determine the channel drift between recalibrations. Whether an eavesdropper who cannot alter the channel's own noise gains by following that drift has not been quantified. Adaptive eavesdropping is posed here as a constrained Markov decision process in which the attacker selects one circuit per round while the noise level follows an Ornstein--Uhlenbeck process and the abort condition is a budget over each block of rounds. The value of adaptation is bounded by the best fixed circuit and a dynamic-programming upper bound. The actions are learnt attacks. Whereas Decker \textit{et al.} trained a parametrised circuit on a fixed gate template against a fixed channel, here the gate structure and rotation angles are searched jointly. This yields circuits compact enough to form a discrete action set, extending the construction to noise models lacking a known template, including the amplitude damping channel. On device-independent E91 under bilateral depolarising noise, a reinforcement-learning attacker raises her Holevo information from $0.135$ for the best fixed circuit to $0.348$ at zero detection, $98\%$ of the upper bound. On BB84 under a drifting bit-flip channel, she exceeds a conservative noise-indexed rule by $0.024$ in fidelity, reaching $99\%$ of the upper bound. Under stationary noise, the attacker's gain from basis asymmetry changes sign between an averaged and a per-basis error-rate constraint. The search, started from random gate sequences, recovers the analytical cloners and the collective-attack key rate, and meets the lower bound of the Winick--L\"utkenhaus--Coles objective from above.
\end{abstract}

\maketitle

\section{Introduction}\label{sec:intro}

\subsection{Problem construction}\label{intro_prob_const}

Quantum key distribution (QKD)~\cite{bennett1984bb84, ekert1991e91} now secures metropolitan links between bank data centres and government sites, and commercial hardware is sold against published security certificates~\cite{Scarani2009}. In every such deployment, the legitimate parties, Alice and Bob, estimate the disturbance of the transmitted states on a sample of their rounds, and a security proof converts that estimate into a bound on the information an eavesdropper, Eve, can hold about the key~\cite{Metger2022, Tomamichel2015}. The user sets an abort threshold on the estimated disturbance and a finite-key buffer so that the bound remains satisfied, and recalibrates the source and detectors at an interval intended to keep the devices within their characterised specification~\cite{Tomamichel2015, Trushechkin2022}.

The analyses underlying these settings treat the channel as stationary between calibrations. A device-dependent proof such as that of BB84 requires the source and detectors to remain within calibrated bounds (e.g., efficiency matching of the detectors and the fidelity of state preparation)~\cite{Trushechkin2022, Wang2023}, whereas the statistics on which a finite-key analysis rests are gathered over a window during which the equipment drifts~\cite{George2022}. Deployments accommodate the drift by recalibrating on a fixed schedule and carrying a margin wide enough to cover the interval, and the width of that margin is set from device characterisation, not from an analysis of what an attacker could gain from the drift~\cite{George2022, Trushechkin2022}.

The numerical methods that certify key rates answer a static question. Given one observed behaviour, the relative-entropy method of Winick, L\"utkenhaus and Coles~\cite{winick2018reliable} and the semidefinite hierarchy of Navascu\'es, Pironio and Ac\'in~\cite{navascues2008npa, brown2021npa} bound how much Eve can know. A drifting channel poses a sequential question in addition, since an eavesdropper who reads the public record may change her strategy from round to round, must commit to each round's strategy before that round's statistic is announced, and seeks the largest expected information over the trajectories of the channel subject to a constraint in every round. A relaxation evaluated at one frozen behaviour describes no such policy, and the value that adaptation confers on an attacker confined to the physical noise process has, to the authors' knowledge, not been quantified. The margin carried across a recalibration interval therefore rests on an untested assumption about the attacker, which leaves the user unable to tell whether a link is over-provisioned, at a loss of secret-key rate, or exposed to a leak that a stationary trusted-noise analysis does not count.

The question concerns the operating margin and leaves the validity of security proofs intact. A proof that attributes the entire observed disturbance to Eve bounds her information from the statistics actually recorded, and entropy-accumulation arguments extend such bounds to strategies that adapt from round to round~\cite{Metger2022}, so that under this convention no leak goes uncounted, whether or not the channel drifts. What drift alters is the distance between that bound and the information available to an attacker who cannot alter the channel's own noise, which is the attacker assumed whenever part of the observed noise is characterised and treated as trusted~\cite{lin2020trusted}. That distance determines how conservative the provisioned margin is.

While sectors such as banking, government, and civil telecommunications increasingly seek to deploy QKD, these ambitions are stalled by the prohibitive costs of maintaining continuous secure links. Without effective optimisation approaches to rigorously bound an adaptive eavesdropper, network operators must rely on worst-case static assumptions. This forces them to provision excessively conservative margins, resulting in frequent network downtime for unnecessary recalibrations, artificially depressed secret-key rates, and severe underutilisation of expensive hardware. Resolving this ambiguity is therefore a practical necessity: translating dynamic threats into measurable, optimisable quantities allows operators to safely extend recalibration intervals and maximise throughput, thereby clearing one of the primary economic barriers to the commercial adoption of quantum-secure infrastructure.

\subsection{Prior work}\label{intro_prior_work}

Decker \textit{et al.}~\cite{decker2024qkd} observed that an attack on BB84 is itself a parametrised quantum circuit and can therefore be learnt. In their formulation, the eavesdropper applies a unitary $U(\Theta)$ to the travelling qubit together with an ancilla she retains, and a second unitary $V(\Lambda)$ after the public basis announcement, both drawn from a hardware-efficient template of rotations and \textsc{cnot}s. Gradient descent on
\begin{equation}
\mathcal{L}(\Theta, \Lambda) = \alpha\!\left(\FAB - f\right)^2 - \FAE
\label{eq:decker}
\end{equation}
pins Bob's fidelity $\FAB$ to a target $f$ through the first term while maximising the eavesdropper's fidelity $\FAE$ through the second. Three of their results serve as the starting point here. Firstly, the learner recovers the analytical cloner on an ideal channel. Secondly, under a bit-flip channel, an \emph{imbalanced} cloner carrying two free parameters exceeds it by giving up fidelity in the damaged basis to preserve fidelity in the untouched one. Thirdly, when the parity information leaked by Cascade reconciliation~\cite{brassard1993secret} is treated as available to the attacker, a joint measurement raises her recovery probability from $0.810$ to the Helstrom value $0.894$~\cite{helstrom1969quantum} at the $11\%$ error bound. Section~\ref{sm:decker} of the Supplemental Material reviews the construction in the notation used here.

For a stationary Pauli channel, whose noise applies each Pauli operator with a fixed probability, the eavesdropper's side is well charted. The optimal individual attack on the noiseless channel is the phase-covariant cloner~\cite{fuchs1997optimal, bruss2000phasecov}, Cerf's Pauli cloning machines supply the asymmetric family natural to Pauli noise~\cite{cerf2000cloning}, the underlying cloners admit explicit circuit realisations~\cite{griffiths1997circuit, niu1998copying}, and a result from the group of Ref.~\cite{decker2024qkd} gives explicit constructions on a fixed Niu--Griffiths circuit for asymmetric cloners tailored to Pauli noise, the biased cloner among them~\cite{kerstan2026pauli}. On the key-rate side, the Shor--Preskill rate $1 - h(Q_Z) - h(Q_X)$, in which $Q_Z$ and $Q_X$ are the error rates observed in the key basis and in the complementary basis and $h$ is the binary entropy, governs independent bit and phase errors~\cite{shor2000simple}, and for a family of imperfect BB84 sources an explicit attack attains the corresponding collective-attack bound with equality~\cite{woodhead2014tight}. Upper bounds on device-dependent rates follow from entropies of two-qubit states~\cite{kraus2005lower, renner2005information}, including for imperfect detectors~\cite{moroder2006upper}. In the device-independent setting, the Ac\'in--Brunner--Gisin--Massar--Pironio--Scarani envelope~\cite{acin2007diqkd} is saturated by an explicit attack~\cite{pironio2009diqkd, primaatmaja2023review}, and convex-combination attacks place constructive upper bounds on device-independent rates at arbitrary visibility and detection efficiency~\cite{lukanowski2023convex, farkas2021bell}. The search introduced below recovers these results where they apply, and the recovery serves here as validation (Sec.~\ref{sm:validation}). Learning has also been applied to eavesdropping directly: variational quantum cloning casts the attack as a trained circuit~\cite{coyle2020varqcloning}, the formulation of Ref.~\cite{decker2024qkd} is the point of departure here, and a recent study trains a tabular agent to modulate a fixed intercept-resend attack against the broadcast error rate~\cite{bagunu2026adaptive}. The latter shares the present premise that an eavesdropper reading the public record is a sequential learner, but its actions are one physical attack at a chosen rate rather than a library of constructed circuits, and its channel is stationary rather than drifting (as elaborated in Sec.~\ref{sm:concurrent}).

The present work uses the learnt formulation of Ref.~\cite{decker2024qkd} as a tool for a different question, and two of its restrictions must be lifted for that purpose. The circuit template is fixed in advance, so that only angles are explored within a topology chosen for tractability, and the channel is held at one stationary noise level, so that no statement about a drifting device follows. For Pauli noise, the first restriction is benign, because a fixed template already reaches the optimum~\cite{kerstan2026pauli}, and a searched topology is justified where no template is known (e.g., amplitude damping) as in Sec.~\ref{sec:nonpauli}. The second restriction excludes the drifting channel of Sec.~\ref{intro_prob_const}, for which neither the learnt nor the analytical treatment returns a strategy, since each evaluates one fixed observed behaviour.

\subsection{Methodology}\label{sec:methodology}

The learnt formulation is extended under two noise conventions. The channel sweeps and the drift experiments treat the noise as \emph{trusted}, meaning that the eavesdropper is assigned a physical noise process she cannot alter, together with an ancilla of bounded size, and the quantity of interest is what she extracts under those constraints. The key-rate comparisons of Secs.~\ref{sm:validation} and~\ref{sm:keyrates} adopt the standard \emph{untrusted} convention, which attributes the entire observed disturbance to the eavesdropper~\cite{Scarani2009}. The first convention measures what a physically restricted attacker achieves on a degrading link, and hence how conservative a provisioned margin is, whereas the second measures what a formal security proof must bound. Results under the trusted convention are therefore statements about the margin and never about the soundness of a proof. Every value reported under either convention is a constructive lower bound on her power, since an explicit circuit exhibits it, and through the Devetak--Winter relation~\cite{devetak2005distillation} each collective attack yields an upper bound on the key rate.

The first departure from the learnt formulation abandons the fixed template. A template selected for tractability under one noise model presupposes the form of the attack, since the extremal circuit under another model is not obliged to share its gate structure. The topology is therefore treated as an unknown on the same footing as the angles, giving
\begin{equation}
\min_{T,\,\theta}\; f(T, \theta),
\label{eq:joint}
\end{equation}
where $T$ is a gate sequence over $\{R_X, R_Y, R_Z, \textsc{cnot}\}$ subject to no symmetry, depth or structural constraint, $\theta$ collects its angles, and $f$ is the negative of the eavesdropper's information measure. The publicly monitored statistic enters as a strict feasibility constraint, so that a circuit exceeding the abort threshold is excluded outright and can never outrank a feasible one. A soft penalty would allow a larger leak to be reported at a detection probability the defender does not tolerate, and such a leak could not be sustained against a user who aborts.

Equation~\eqref{eq:joint} is solved by nesting two loops. An evolutionary outer loop proposes gate sequences through moves that add, remove or replace one gate~\cite{real2019regevo, chivilikhin2020mogvqe}, and for each candidate an inner loop tunes the angles by a gradient-free optimiser at a fixed evaluation budget, covariance-matrix adaptation~\cite{Hansen2001, hansen2016cmaes} for the BB84 sweeps and simultaneous-perturbation stochastic approximation~\cite{spall1992multivariate} for the E91 library. The two are interchangeable at fixed budget (Sec.~\ref{sm:framework}), so the choice affects neither the returned circuits nor the reported values; the gate alphabet and the fitness are what change between the protocols, and Sec.~\ref{sm:framework} gives the architecture, hyperparameters and cost. The procedure returns a circuit that achieves a stated leak at a stated disturbance, which is a weaker guarantee than a proof of optimality but is available where no proof is.

The second departure replaces the stationary channel by a drifting one. The noise parameter follows an Ornstein--Uhlenbeck process, a mean-reverting Gaussian process~\cite{uhlenbeck1930theory}, and the legitimate parties abort when the monitored statistic, averaged over a block of rounds, crosses its threshold. Because the returned circuits are small, of four to nine gates, a library of them forms a discrete action set, and adaptive eavesdropping becomes a constrained Markov decision process, a sequential decision problem in which every action yields a reward and the actions of a block share one constraint (Sec.~\ref{sec:drift_formulation}). Solving it with a value-based reinforcement-learning agent~\cite{hessel2018rainbow} measures the information that adaptation yields to the attacker, and comparing the result with the best fixed circuit and with a dynamic-programming upper bound bounds that value from both sides (Proposition~\ref{prop:ordering}). This sequential formulation is the principal departure from Ref.~\cite{decker2024qkd} and from the static analyses of Sec.~\ref{intro_prior_work}, which evaluate one channel at a time and return no policy.

\subsection{Contributions}

The contributions are five. Firstly, the attainable information of an attacker confined to a stationary noisy channel is mapped across channel families at a fixed abort threshold, together with its separation from the noiseless cloning figure a user would quote (Sec.~\ref{sec:noisy_channels}). Secondly, the sign of the attacker's gain from basis asymmetry is shown to depend on the monitoring convention, positive under an error-rate constraint averaged over the two bases and negative once each basis is constrained separately (Sec.~\ref{sec:asymmetry}), and the search is applied without modification to amplitude damping, for which the Pauli-channel cloning solutions do not apply (Sec.~\ref{sec:nonpauli}). Thirdly, the bilateral structure of the channel in the entanglement-based protocol is derived, which fixes a noise level beyond which even an undisturbed link yields no key (Sec.~\ref{sec:di_channel}). Fourthly, adaptive eavesdropping under drift is posed as a constrained Markov decision process over libraries of learnt circuits and solved for BB84 and for device-independent E91, with the value of adaptation bounded between fixed play and a dynamic-programming upper bound; on E91 adaptation multiplies the attacker's information by about $2.6$ at unchanged detection (Secs.~\ref{sec:drift_formulation}--\ref{sec:drift_e91}). Fifthly, the role of tolerated detection is examined, and on both protocols the upper bound is shown to be reachable without it (Sec.~\ref{sec:thresholds}). The validation of the search against known closed forms and against a reliable numerical lower bound is collected in Sec.~\ref{sm:validation}.

The central quantity is the additional information an attacker obtains by adapting to the drift when she cannot alter the channel's own noise. The choice of a recalibration interval depends on it, and to the authors' knowledge no construction for it was previously available. The measurement requires no closed form for the channel, so the same apparatus applies to noise models and drift processes beyond those treated here.

\section{Protocols, channels and figures of merit}\label{sec:setting}

In the BB84 protocol, Alice prepares one of the four states $\ket{0}$, $\ket{1}$, $\ket{+}$, and $\ket{-}$ with equal probability. The eavesdropper applies a unitary transformation, $U_E$, to the travelling qubit together with an ancilla of $n_a$ qubits initialised in $\ket{0}^{\otimes n_a}$ (where $n_a = 2$ unless stated otherwise). Crucially, this interaction occurs before the signal passes through the noisy channel, meaning Eve interacts with the undegraded signal. Because the channel acts only after her interaction, her reduced state remains untouched. Consequently, Eve's figure of merit depends solely on her attack circuit, whereas Bob's depends on both the circuit and the channel. Bob's fidelity $\FAB$ is the mean over the four states of his probability of recovering Alice's bit when he measures in the matched basis. The eavesdropper's fidelity $\FAE$ is the mean over the two bases of the Helstrom success probability
\begin{equation}
P_{\mathrm{H}} = \tfrac{1}{2} + \tfrac{1}{4}\left\|\rho_E^{(0)} - \rho_E^{(1)}\right\|_1,
\label{eq:helstrom}
\end{equation}
which is the largest probability with which she can discriminate her two conditional memory states $\rho_E^{(0)}$ and $\rho_E^{(1)}$ at equal priors once the basis is announced~\cite{helstrom1969quantum}. The abort threshold is $\FAB \geqslant 0.87$ throughout. It corresponds to a basis-averaged error rate of $13\%$, beyond the $11\%$ at which the symmetric Shor--Preskill rate vanishes~\cite{shor2000simple}, and beyond the $12.4\%$ reached with noisy preprocessing~\cite{renner2005information}. Therefore, it is not an operational threshold for one-way post-processing under the untrusted convention, although two-way post-processing tolerates up to $18.9$--$20\%$~\cite{gottesman2003twoway, chau2002practical}. It is chosen to leave room for the channel error swept in Sec.~\ref{sec:stationary}, since under the trusted convention the attacker's reach is set by the interval between the honest fidelity and the threshold, and the search is validated at three further thresholds in Sec.~\ref{sm:validation}.

Two classes of attack appear in what follows. Because $\FAE$ is the success probability of a measurement performed on a single ancilla, results reported in $\FAE$ concern \emph{individual} attacks, for which the closed form of Refs.~\cite{fuchs1997optimal, cerf2000cloning} is optimal. The Holevo quantity $\Holevo(A{:}E)$ bounds the information obtainable by a joint measurement over many stored ancillas, so results reported in $\Holevo$ concern \emph{collective} attacks and are the ones converted into key rates (Sec.~\ref{sm:keyrates}). Since the collective class contains the individual one and the class of coherent attacks contains both, an explicitly exhibited collective attack bounds from above the rate secure against coherent attacks as well, and no de Finetti or entropy-accumulation argument is required for such upper bounds to hold.

In E91 and its device-independent form, Alice and Bob share entangled pairs and security rests on a Bell test~\cite{bell1964einstein}. The monitored statistic is the Clauser--Horne--Shimony--Holt (CHSH) value $S$~\cite{clauser1969proposed}, a combination of four correlators between Alice's and Bob's measurement outcomes that is at most $2$ for local models and at most $2\sqrt{2}$ in quantum theory~\cite{cirelson1980quantum}, and the protocol aborts when $S$ falls below a threshold $\tau_S$. The attacker's information is the Holevo quantity $\Holevo(A{:}E)$ between Alice's key bit and the eavesdropper's system~\cite{ekert1991e91, acin2007diqkd}. Table~\ref{tab:mapping} of Sec.~\ref{sec:drift_formulation} sets the two protocols side by side.

\section{Attacks under stationary channel noise}\label{sec:stationary}

Before introducing channel drift, the search evaluates stationary channels with a fixed amount of inherent error. The maximum information attainable at each noise level establishes a frontier. As the noise level changes, the adaptive attacker detailed in Sec.~\ref{sec:drift} exploits movement along this frontier.

\subsection{Attainable information on noisy channels}\label{sec:noisy_channels}

With the abort threshold fixed at $\FAB \geqslant 0.87$ and a total channel error weight $\varepsilon$ distributed between the two bases, the largest $\FAE$ attained by any feasible circuit lies well below the noiseless cloning envelope $\FAE = \tfrac{1}{2} + \sqrt{\FAB(1-\FAB)}$ of Refs.~\cite{fuchs1997optimal, cerf2000cloning} evaluated at the observed fidelity. That envelope depends only on $\FAB$ and returns $0.836$ at the threshold, whereas the attainable value falls from $0.828$ at $\varepsilon = 0.02$ to $0.786$ at $\varepsilon = 0.10$ and $0.664$ at $\varepsilon = 0.22$, so the separation grows from $0.0085$ through $0.050$ to $0.172$ as in Fig.~\ref{fig:gap}(a) and Table~\ref{tab:sweep}.

Two qualifications contextualise this separation. Firstly, the comparison assumes the trusted-noise convention (Sec.~\ref{sec:methodology}), which models the channel as a fixed process the eavesdropper cannot alter. The theoretical envelope, however, was derived assuming she has full access to a noiseless channel. Consequently, the separation measures the conservatism of the quoted figure for a physically restricted attacker, rather than implying any looseness in published bounds. Secondly, for Pauli noise, the attainable values have an analytical counterpart tied to the imbalanced and biased cloners of Refs.~\cite{decker2024qkd, kerstan2026pauli}. A disturbance $D_b$ introduced by the attacker in basis $b$ is observed as $Q_b = p_b + (1-2p_b)D_b$, where $p_b$ is the channel's inherent error rate in that basis. Composing this observation with the per-basis cloning trade-off gives
\begin{equation}
\FAE = \tfrac{1}{2} + \tfrac{1}{2}\sum_{b \in \{Z, X\}} \sqrt{D_b(1-D_b)},
\label{eq:composed}
\end{equation}
maximised over $(D_Z, D_X)$ subject to the feasibility constraint. Every entry of Table~\ref{tab:sweep} agrees with this expression to within $2\times10^{-4}$, although the search is not supplied with it. The Pauli sweeps therefore validate the search on noisy channels. The constructive content of Fig.~\ref{fig:gap}(a) lies in the magnitude of the separation and in the returned circuits, which populate the library of Sec.~\ref{sec:drift}. Furthermore, the search is essential where no such expression is known, such as for amplitude damping (Sec.~\ref{sec:nonpauli}). The underlying mechanism is the interval between the honest fidelity (which is $1 - \varepsilon/2$ for these channels) and the threshold. At $\varepsilon = 0.22$, the honest channel already sits at $\FAB = 0.89$, leaving the attacker just $0.02$ of fidelity to disturb. In contrast, the theoretical envelope attributes the entire distance from $\FAB = 1$ down to the threshold to her. This interval shifts whenever the channel's error changes, and Sec.~\ref{sec:drift} measures what an attacker obtains by tracking it.

\begin{figure}[t]
\centering
\includegraphics[width=\linewidth]{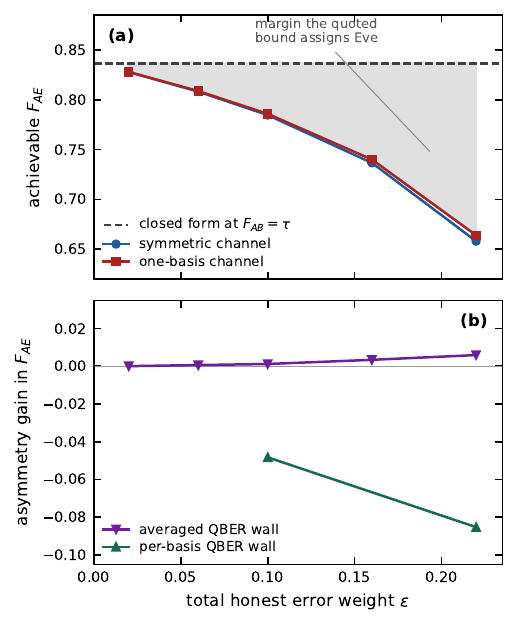}
\caption{Attainable attacks under stationary channel noise at an abort threshold of $\FAB \geqslant 0.87$. (a)~The maximum attainable $\FAE$ under the trusted-noise convention as the channel's total honest error weight, $\varepsilon$, increases. This is plotted for two scenarios: a channel concentrating all its error in a single basis, and one splitting the error evenly. Because both channels produce the same observed disturbance, the noiseless closed-form solution (dashed line) assigns them identical values. The shaded region highlights the gap between this theoretical limit and the actual attainable $\FAE$. Each data point represents the best outcome from three random seeds, with a cross-seed spread of approximately $0.001$ at $\varepsilon = 0.10$ (Sec.~\ref{sm:sweeps}). (b)~The attacker's gain from basis asymmetry under two different monitoring conventions. When constrained by the basis-averaged quantum bit error rate (QBER), the gain is small and positive. However, when constraints are applied to the per-basis rates separately (shown here under the untrusted convention where the attacker must reproduce each observed rate), the gain changes sign and increases by more than an order of magnitude. The trusted per-basis values in Table~\ref{tab:perbasis} exhibit this same sign change.}
\label{fig:gap}
\end{figure}

\subsection{Monitoring conventions and basis asymmetry}\label{sec:asymmetry}

Holding $\varepsilon$ fixed while shifting the error between the bases produces two quantum channels. A user monitoring only the basis-averaged quantum bit error rate (QBER) cannot distinguish between them, as both yield the same $\FAB$. Under this convention, the attainable leak differs slightly between the two channels. This occurs because any disturbance the attacker introduces in an already damaged basis is scaled by $1-2p_b$ in Eq.~\eqref{eq:composed} and is partially masked by the channel's intrinsic error. Consequently, her gain from this basis asymmetry rises from below $10^{-4}$ at $\varepsilon = 0.02$, to $0.0012$ at $\varepsilon = 0.10$, and reaches $0.0059$ at $\varepsilon = 0.22$.

Security proofs and most implementations instead estimate the error rate in each basis separately, and under a per-basis constraint the sign of the gain reverses, with a magnitude larger by more than an order of magnitude. Two versions of that constraint are reported in Table~\ref{tab:perbasis}. Under the trusted convention, with the key-basis rate $Q_Z$ and the complementary rate $Q_X$ each held at or below the same $13\%$, Eq.~\eqref{eq:composed} gives a gain of $-0.0049$ at $\varepsilon = 0.06$ and $-0.0214$ at $\varepsilon = 0.10$, and for $\varepsilon > 0.13$ the one-basis channel fails the per-basis test with no attacker present. Under the untrusted convention of a security proof, in which the attacker supplies the whole disturbance and must reproduce each observed rate to within $2\times10^{-3}$, the search returns a gain of $-0.048$ at $\varepsilon = 0.10$ and $-0.085$ at $\varepsilon = 0.22$ as in Fig.~\ref{fig:gap}(b), in agreement to within $10^{-4}$ with $\tfrac{1}{2} + \tfrac{1}{2}\sum_b \sqrt{Q_b(1-Q_b)}$ evaluated at the edge of the tolerance. Absolute values in the untrusted rows are not comparable with those in the trusted rows, since the untrusted attacker is credited with the channel's noise as her own, and only the difference within a pair of rows is compared.

\begin{table}[t]
\caption{Basis asymmetry evaluated across two feasibility constraints and two noise conventions. Under the averaged constraint, the attacker can redistribute the disturbance between the bases, provided that $\FAB \geqslant 0.87$. Under the trusted per-basis constraint, each observed error rate must independently remain at or below $0.13$. These specific rows (marked $^\ast$) are evaluated using Eq.~\eqref{eq:composed} -- a result that the search independently reproduces in the averaged rows. Notably, at $\varepsilon = 0.22$, the one-basis channel already fails this per-basis test. Under the untrusted per-basis constraint, the attacker supplies the entire disturbance and must reproduce each observed rate to within a tolerance of $2\times10^{-3}$. Finally, absolute values should only be compared between rows that share the same noise convention.}
\label{tab:perbasis}
\begin{ruledtabular}
\setlength{\tabcolsep}{0pt}
\begin{tabular}{cccccc}
$\varepsilon$ & constraint & Noise & Symmetric $\FAE$ & One-basis $\FAE$ & Gain \\
\colrule
0.10 & averaged  & trusted   & 0.7846 & 0.7858 & $+0.0012$ \\
0.10 & per-basis & trusted$^\ast$ & 0.7846 & 0.7631 & $-0.0214$ \\
0.10 & per-basis & untrusted & 0.7220 & 0.6737 & $-0.0483$ \\
0.22 & averaged  & trusted   & 0.6581 & 0.6640 & $+0.0059$ \\
0.22 & per-basis & trusted$^\ast$ & 0.6581 & abort & --- \\
0.22 & per-basis & untrusted & 0.8153 & 0.7301 & $-0.0853$ \\
\end{tabular}
\end{ruledtabular}
\end{table}

The mechanism is the concavity of Eq.~\eqref{eq:composed}, whose two terms reward spreading a given amount of disturbance over both bases. An averaged constraint lets the attacker place disturbance wherever it is cheapest, so an asymmetric channel incurs no additional cost for her, and the masking effect gives her a small gain. A per-basis constraint ties each disturbance to the room left in its own basis, and a one-basis channel leaves that room unevenly: under the trusted convention the damaged basis admits only $D_Z \leqslant 0.0375$ at $\varepsilon = 0.10$, and under the untrusted convention the clean basis pins $Q_X \approx 0.002$, which caps her key-basis Holevo information at $h(Q_X) = 0.0204$ however much error the key basis carries (Sec.~\ref{sm:wlc}). The asymmetry advantage is therefore a property of the monitoring convention. Per-basis estimation is already the convention of security proofs, and the results quantify what an averaged threshold would concede on an asymmetric channel. At $\varepsilon = 0.10$ on the one-basis channel, replacing the averaged constraint by per-basis constraints at the same $13\%$ lowers the attainable fidelity from $0.7858$ to $0.7631$, by $0.023$, which requires no additional measurement, since the two rates are already collected.

\subsection{Non-Pauli noise and ancilla size}\label{sec:nonpauli}

The search is not restricted to Pauli noise. Amplitude damping, the process by which an excited state relaxes towards the ground state, is not a Pauli channel and lies outside the Pauli-channel cloning solutions of Refs.~\cite{cerf2000cloning, kerstan2026pauli}, and the same search treats it without modification. It returns circuits of four to five gates whose attainable leak declines from $0.820$ to $0.744$ as the damping parameter $\gamma$ runs from $0.05$ to $0.20$ (Table~\ref{tab:ad}). Enlarging the attacker's quantum memory at the one-basis operating point $\varepsilon = 0.10$ changes the attainable value by less than $10^{-3}$, from $0.7850$ with one ancilla qubit to $0.7858$ with two, and a third qubit produces no change at four-decimal resolution. The two-qubit ancilla used throughout is therefore adequate at that operating point, and since the ancilla size enters the search as an initialisation parameter, larger memories require no change to the solver.

\subsection{Bilateral noise in the entanglement-based protocol}\label{sec:di_channel}

In E91, the constraint becomes a margin on the CHSH value and the leakage measure is $\Holevo(A{:}E)$. The channel acts on both subsystems as $\Lambda_p^A \otimes \Lambda_p^B$, with $\Lambda_p$ the single-qubit depolarising channel of strength $p$. Depolarising noise multiplies every non-identity Pauli component of a state by $(1-p)$, and each CHSH correlator is bilinear in traceless Pauli operators on $A$ and $B$, so each correlator collects two factors,
\begin{equation}
S_{\mathrm{obs}}(U_E, p) = (1-p)^2\, S_{\mathrm{pre}}(U_E),
\label{eq:chsh}
\end{equation}
where $S_{\mathrm{pre}}(U_E)$ is the CHSH value of the state prepared by the attacker's interaction, $U_E$, before the channel acts. Tracing out Bob leaves only Pauli coefficients that act as the identity on $B$. Because $\Lambda_B$ does not damp these identity components, the eavesdropper's observed state remains independent of $\Lambda_B$ (so that $\rho_E^{\mathrm{obs}} = \rho_E^{\mathrm{pre}}$). This independence is confirmed numerically to four decimal places across $p \in \{0, 0.05, 0.1, 0.2, 0.5\}$. Consequently, the two subsystems constrain the attacker through distinctly separate mechanisms: $\Lambda_A$ lowers $\Holevo$ via data processing on Alice's measured qubit, whereas $\Lambda_B$ leaves $\Holevo$ unchanged but tightens the CHSH margin through Eq.~\eqref{eq:chsh}. Modelling only a single subsystem incorrectly degrades this margin by $(1-p)$ instead of $(1-p)^2$. At small $p$, this error halves the noise-induced loss of the CHSH value and estimates the failure point of the honest link at roughly twice its true noise level. Therefore, a single-subsystem noise model miscalculates the valid operating range by a factor of about two.

A second feature of the entanglement-based case bounds the attacker from the honest side. The asymptotic key rate ~\cite{acin2007diqkd}
\begin{equation}
    r(S) = 1 - h(Q) - h\!\left(\tfrac{1}{2} + \tfrac{1}{2}\sqrt{(S/2)^2 - 1}\right),
\end{equation}
with $Q$ the error rate between the measurements from which the key is formed, vanishes at $S_{\mathrm{sec}} \approx 2.4240$ when evaluated on the honest depolarised pair, for which $Q = \tfrac{1}{2}(1 - S/2\sqrt{2})$, and a finite-statistics buffer places the operational threshold at $\tau_S \approx 2.444$. Since the bilateral channel lowers the honest CHSH value to $(1-p)^2\,2\sqrt{2}$, an undisturbed Bell pair reaches $S_{\mathrm{sec}}$ at
\begin{equation}
p_{\mathrm{cliff}} = 1 - \sqrt{S_{\mathrm{sec}}/2\sqrt{2}} \approx 0.0742,
\label{eq:cliff}
\end{equation}
above which even an undisturbed channel yields no key, and it falls below $\tau_S$ at the slightly smaller noise level $p \approx 0.070$. The attacker therefore operates within a band that closes at a noise level fixed by quantum theory and by the key-rate formula, whereas in the prepare-and-measure case the user chooses the threshold independently of the channel. The extent to which a drifting channel can be exploited is accordingly determined by this geometry together with the abort threshold (Sec.~\ref{sec:thresholds}).

\section{Attacks under device drift}\label{sec:drift}

A drifting noise lower bound is typically treated as a benign operational error that merely reduces the key rate. Because the closed-form solutions of Sec.~\ref{intro_prior_work} and the relaxations used to certify key rates evaluate only a single, frozen channel state, neither can determine whether an adversary gains an advantage by adapting to the channel's drift. 

To address this, this section frames adaptive eavesdropping as a sequential decision problem (Sec.~\ref{sec:drift_formulation}). It constructs the libraries of quantum circuits from which the attacker selects their actions (Sec.~\ref{sec:libraries}), and describes the adaptive agent alongside its baselines (Sec.~\ref{sec:agent}). Finally, it quantifies the value of this adaptation for both BB84 (Sec.~\ref{sec:drift_bb84}) and E91 (Sec.~\ref{sec:drift_e91}), exploring its dependence on the threshold geometry (Sec.~\ref{sec:thresholds}).

\subsection{Adaptive eavesdropping as a constrained decision process}\label{sec:drift_formulation}

The channel parameter $p_t$, which is the bit-flip rate for BB84 and the depolarising strength for E91, follows an Ornstein--Uhlenbeck process~\cite{uhlenbeck1930theory},
\begin{equation}
\mathrm{d}p_t = \kappa\,(\mu - p_t)\,\mathrm{d}t + \sigma\,\mathrm{d}W_t,
\label{eq:ou}
\end{equation}
which reverts at rate $\kappa$ towards the mean $\mu$ under fluctuations of volatility $\sigma$ driven by a Wiener process $W_t$. It is integrated exactly with one round as the time step and clipped to the ranges of Table~\ref{tab:mapping}. An episode comprises $T = 50$ rounds. In each round the attacker selects one circuit $a_t$ from a library $\mathcal{A}$ and receives the reward $r(a_t, p_t)$, equal to $\FAE$ for BB84 and to $\Holevo(A{:}E)$ for E91. The single-round value of the monitored statistic, $\FAB$ for BB84 and $S$ for E91, is fixed by $a_t$ and $p_t$, since the statistic is evaluated asymptotically, and the legitimate parties abort when its mean over the episode falls below the threshold. In other words, the constraint is a budget over the entire episode and not a single round.

The attacker observes the current noise level $p_t$ together with its three previous values, her previous action, the mean of the monitored statistic over the last ten rounds, her mean reward so far and the number of rounds remaining. Knowledge of $p_t$ idealises an attacker who infers the channel's own noise from the public error-rate record, which she can do because she knows the disturbance she contributes. The idealisation removes the lag of that inference and so favours the attacker, which is the conservative direction for a defender. A policy $\pi$ maps observations to library circuits, and the attacker maximises
\begin{equation}
J_\lambda(\pi) = \mathbb{E}\!\left[\,\sum_{t=1}^{T} r(a_t, p_t)\right] - \lambda\,\Pr\!\left[D_\pi\right],
\label{eq:objective}
\end{equation}
where $D_\pi$ represents the event of detection (when the episode mean falls below the threshold) and $\lambda \geqslant 0$ is the detection penalty, tuned for either strict or tolerated settings. The results below report the mean reward per round alongside the observed detection frequency. Crucially, the compactness of the generated circuits allows us to build small, discrete libraries ($24$ circuits for BB84 and $26$ for E91, plus the identity, as detailed in Sec.~\ref{sec:libraries}). This reduces an otherwise intractable continuum of eavesdropping strategies into a finite action set, casting Eq.~\eqref{eq:objective} as a constrained Markov decision process solvable by standard methods.

\begin{table*}[t]
\caption{Component-by-component correspondence between the two drift environments. The monitored statistic is averaged over each episode of $T = 50$ rounds, and the drift is the discretised Ornstein--Uhlenbeck process of Eq.~\eqref{eq:ou} with one round as the time step. The cliff is the noise level at which the undisturbed channel alone reaches the abort threshold for BB84 and the zero-key-rate value $S_{\mathrm{sec}}$ for E91 (Eq.~\eqref{eq:cliff}).}
\label{tab:mapping}
\begin{ruledtabular}
\begin{tabular}{p{0.2\textwidth}p{0.37\textwidth}p{0.37\textwidth}}
Component & BB84 & E91/DIQKD \\
\colrule
Protocol family   & Prepare-and-measure & Entanglement-based, CHSH game \\
Channel           & Bit-flip of rate $p_{\mathrm{bf}}$, before the attacker's interaction & Bilateral depolarising $\Lambda_p^A \otimes \Lambda_p^B$, after the preparation \\
Attacker's memory & One qubit & One qubit (three-qubit preparation on $A$, $B$, $E$) \\
Honest reference  & $\FAB = 1 - p_{\mathrm{bf}}/2$ & $S_{\mathrm{obs}} = (1-p)^2 \cdot 2\sqrt{2}$ \\
Abort threshold (episode mean) & $\FAB \geqslant 0.87$ & $S_{\mathrm{obs}} \geqslant \tau_S \approx 2.444$ \\
Attacker's information & $\FAE$, memory measured in the announced basis & $\Holevo(A{:}E)$ \\
Drift parameters  & $\mu = 0.10$, $\sigma = 0.03$, $\kappa = 0.1$, $p \in [0.01, 0.30]$ & $\mu = 0.03$, $\sigma = 0.015$, $\kappa = 0.1$, $p \in [0, 0.0642]$ \\
Stationary spread $\sigma/\sqrt{2\kappa}$ & $0.067$ & $0.034$ \\
Library           & 24 circuits (4 levels $\times$ 6 anchors) $+$ identity & 26 circuits (7 levels $\times$ 4 anchors, 2 infeasible) $+$ identity \\
Cliff             & $p_{\mathrm{bf}} = 0.26$ & $p_{\mathrm{cliff}} \approx 0.0742$ \\
\end{tabular}
\end{ruledtabular}
\end{table*}

The contrast with a semidefinite relaxation is one of formulation. A relaxation characterises the correlations compatible with one fixed observed behaviour and answers a static question, namely how much the eavesdropper can know given that behaviour. In the adaptive problem the channel parameter evolves stochastically, so the constraint changes from round to round, the abort condition is a budget shared across the rounds of an episode, and the objective is an expectation over channel trajectories. Deciding how much of the budget to spend at the current noise level requires anticipating the noise levels to come, which no evaluation at a frozen behaviour supplies, so a semidefinite treatment would have to be embedded in a sequential optimisation that it does not itself provide.

Three reference policies bound the adaptive attacker. A \emph{fixed circuit} plays the same library element in every round of an episode. The \emph{greedy oracle} myopically selects the highest-reward circuit that keeps the ten-round mean above the threshold; it avoids detection but ignores future rounds. Finally, \emph{dynamic programming} solves Eq.~\eqref{eq:objective} over the augmented state $s_t = (t, p_t, E_t)$, where $E_t$ tracks the accumulated threshold excess. By applying a penalty strict enough to guarantee zero detection, its value, $V_{\mathrm{DP}}$, establishes the absolute performance upper bound for the library.

\begin{proposition}\label{prop:ordering}
Let $V_{\mathrm{fix}}$ be the mean reward per round of the best fixed circuit that incurs no detection, $V_\pi$ that of any policy $\pi$ that acts on the observations above and incurs no detection, and $V_{\mathrm{DP}}$ the value of dynamic programming over $s_t$ under the constraint that no trajectory ends in detection. Then, $V_{\mathrm{fix}} \leqslant V_\pi \leqslant V_{\mathrm{DP}}$.
\end{proposition}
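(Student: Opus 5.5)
The plan is to read the statement as a chain of inclusions between policy classes. Each value is a supremum of the same functional, the mean reward per round $\tfrac{1}{T}\mathbb{E}[\sum_t r(a_t,p_t)]$, taken over a progressively larger feasible set. One point of interpretation must be fixed first. The left inequality cannot hold for an arbitrary observation-based policy, since the identity-only policy is undetected and earns less than the best fixed circuit. I will therefore prove $V_{\mathrm{fix}} \leqslant V_\pi$ for the best policy in the observation-based, zero-detection class, which is the quantity the learnt agent approximates. I will prove $V_\pi \leqslant V_{\mathrm{DP}}$ for every policy in that class.

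For the left inequality, I will note that playing one library element $a$ in every round is a policy of the admissible type: it is the constant map from observations to $\mathcal{A}$. Its detection event coincides with that of the fixed circuit, so if the fixed circuit incurs no detection, the constant policy lies in the zero-detection observation-based class. Taking the supremum over $a$ then gives $V_{\mathrm{fix}} \leqslant \sup_\pi V_\pi$.

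For the right inequality, I will show that every admissible $\pi$ induces a feasible policy for the dynamic programme with the same value. The observation at round $t$ consists of the current and three previous noise levels, the previous action, the ten-round mean of the statistic, the running mean reward and the rounds remaining. Each of these is a measurable function of the history $h_t=(p_1,a_1,\dots,p_{t-1},a_{t-1},p_t)$, because the single-round statistic and reward are deterministic functions of $(a_s,p_s)$. Hence $\pi$ is a history-dependent policy of the augmented process. I will then check three properties of $s_t=(t,p_t,E_t)$. First, it is a controlled Markov state: the exactly integrated Ornstein--Uhlenbeck step, with clipping, is a Markov kernel in $p_t$ alone, and $E_{t+1}=E_t+(\text{statistic}(a_t,p_t)-\text{threshold})$ is a deterministic update. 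Second, the reward depends only on $(a_t,p_t)$. Third, detection is the event $\{E_T<0\}$, a function of the terminal state. Under these conditions the standard result for finite-horizon MDPs applies: the supremum over history-dependent policies equals the supremum over Markov policies on the augmented state. Because the episode constraint has been folded into $E_t$, the hard constraint becomes a terminal feasibility condition. The Bellman recursion with value $-\infty$ on $\{E_T<0\}$ (equivalently, a penalty $\lambda$ large enough, as in the paper's construction) therefore computes exactly this supremum. This gives $V_\pi \leqslant V_{\mathrm{DP}}$.

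The main obstacle is measure-theoretic. The transition kernel is continuous and Gaussian, so a policy with zero detection probability might still have detected trajectories on a null set, whereas the statement asks that no trajectory end in detection. My first step would be to read both constraints almost surely. That reading leaves values unchanged, and it makes feasibility for $\pi$ imply feasibility for the DP. I would then invoke a measurable-selection argument to justify the Markov reduction on the continuous state space. Failing that, I would work directly on the discretised grid of $(p_t,E_t)$ on which the DP is actually solved, where the reduction is elementary. In that case, I would state the proposition for the discretised dynamics.
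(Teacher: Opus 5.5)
Your proposal is correct and follows the paper's own argument: the left inequality holds because a fixed circuit is the constant observation-based policy, and the right inequality holds because every observation is a function of the history, so that, with $p_t$ Markov and detection determined by $E_T$, history-dependent policies are dominated by the optimal policy on $s_t=(t,p_t,E_t)$ (the paper cites Puterman for exactly this step). Your restriction of the left inequality to the best admissible policy is warranted, since the paper's one-line argument likewise yields only $V_{\mathrm{fix}} \leqslant \sup_\pi V_\pi$; your almost-sure reading of the constraint and your discretised fallback add care that the paper's proof omits.
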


\begin{proof}
A fixed circuit is a policy that ignores its observations, which gives the first inequality. The drift is Markov and $p_t$ is observed, and detection depends on the trajectory only through $E_T$, so $s_t$ is a sufficient statistic for the constrained problem. Every observation available to $\pi$ is a function of the history, and a history-dependent policy is dominated in a finite-horizon Markov decision process by the optimal policy over a sufficient state~\cite{puterman1994mdp}, which gives the second inequality.
\end{proof}

As feasible policies, both the greedy oracle and a zero-detection trained agent fall below this optimal upper bound. The agent's shortfall from $V_{\mathrm{DP}}$ reveals its unexploited potential, while the oracle's gap to $V_{\mathrm{DP}}$ quantifies the value of anticipating the episode budget.

\subsection{Libraries of learnt circuits}\label{sec:libraries}

The BB84 drift library was built before the stationary sweeps of Sec.~\ref{sec:stationary}, in the setting of Ref.~\cite{decker2024qkd}. The bit-flip acts on the travelling qubit before the attacker's interaction, so the noise also degrades her copy; she holds one ancilla qubit; and $\FAE$ is the probability that her ancilla, measured in the announced basis, returns Alice's bit, which lies below the Helstrom value of Eq.~\eqref{eq:helstrom}. The library comprises four levels of aggressiveness, each optimised at six noise anchors $p \in \{0.02, 0.05, 0.10, 0.15, 0.20, 0.25\}$, giving $24$ circuits (Sec.~\ref{sm:compute}). The drift results for BB84 therefore concern this weaker, noise-degraded attacker, and their absolute values are not comparable with those of Sec.~\ref{sec:noisy_channels}. The E91 library constrains the CHSH value alone, which follows the convention of the envelope it is compared against~\cite{acin2007diqkd} and is the weaker of the two possible constraints (the stronger one is the full behaviour $p(ab|xy)$, the set of outcome probabilities given the measurement settings). It spans seven absolute CHSH targets, $S_{\mathrm{target}} \in \{2.55, 2.50, 2.43, 2.30, 2.20, 2.00, 1.80\}$, and four noise anchors, $p \in \{0, 0.0214, 0.0428, 0.0642\}$. By Eq.~\eqref{eq:chsh}, no preparation yields a CHSH value above $s_{\mathrm{honest}}(p) = (1-p)^2\,2\sqrt{2}$, the value of the undisturbed channel, so cells with $S_{\mathrm{target}} > s_{\mathrm{honest}}(p) - 0.01$ are infeasible and are skipped, which leaves $26$ cells of five to nine gates. Each is built by the procedure of Sec.~\ref{sm:framework} on a three-qubit hardware-efficient preparation with fitness $\Holevo - \alpha(S - S_{\mathrm{target}})^2$ and $\alpha = 100$, which drives the cell to its prescribed CHSH value. The drift range for E91 is narrow by necessity, since its upper edge $p = 0.0642$ lies $0.01$ below the value $p_{\mathrm{cliff}}$ of Eq.~\eqref{eq:cliff}.

The library cells at $p = 0$ lie on the noiseless Ac\'in envelope (Fig.~\ref{fig:pareto}), which at $S = \tau_S$ permits $\Holevo \leqslant 0.607$, so the search saturates the analytical frontier where one exists. Averaged over the operating noise distribution, the attainable value is $\Holevo \approx 0.35$: the zero-detection upper bound over the library is $V_{\mathrm{DP}} \approx 0.356$, and a warm-started sequential-quadratic Pareto search over the attacker's full three-qubit pre-channel state, $16$ real parameters, returns $0.36$, so the library captures nearly all that this dimension-restricted attacker achieves. 

The separation between $0.35$ and $0.607$ does not indicate looseness of the envelope for two reasons. The envelope is evaluated at $p = 0$ whereas the attainable value is averaged over $p > 0$, so a noiseless upper bound is placed beside a noisy lower bound, and the bilateral channel of Eq.~\eqref{eq:chsh} accounts for most of the difference. In addition, the attacks are constructed on a fixed three-qubit preparation, whereas a device-independent statement admits an eavesdropper of unbounded dimension, so the value is a constructive lower bound on what a dimension-restricted attacker achieves at operating noise. Section~\ref{sm:keyrates} places both numbers beside the upper bound returned by a semidefinite relaxation.

\begin{figure}[t]
\centering
\includegraphics[width=\linewidth]{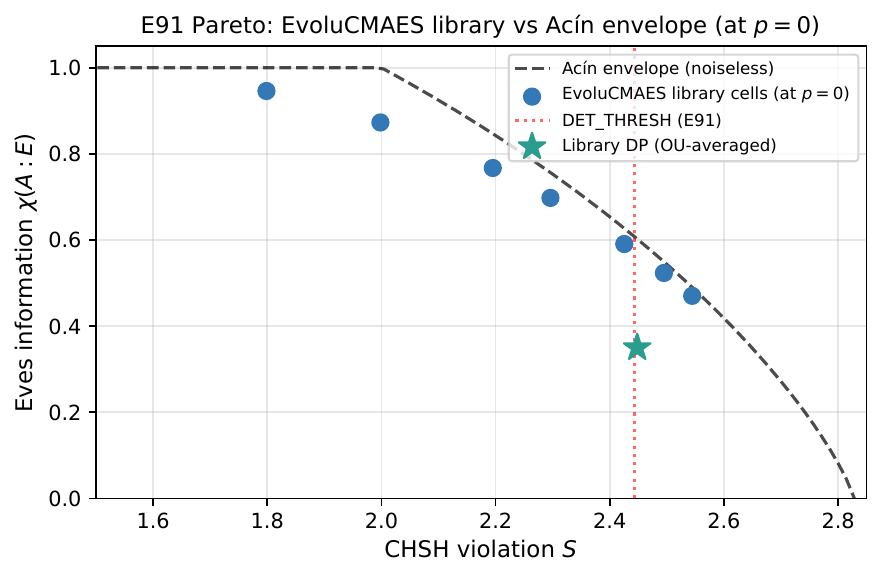}
\caption{Disturbance versus information in the device-independent setting. At $p = 0$, the discovered attacks trace the noiseless Ac\'in envelope~\cite{acin2007diqkd, pironio2009diqkd}, demonstrating that the search successfully saturates the analytical frontier where one exists. However, when averaged over the drifting noise, the attainable value (marked by the star) drops to $\Holevo \approx 0.35$, falling short of the $0.607$ permitted by the theoretical envelope at the threshold. This gap occurs for two reasons: the theoretical envelope assumes a strictly noiseless channel ($p = 0$) while the attainable value averages over active noise ($p > 0$), and the constructed attacks are restricted to a fixed three-qubit preparation.}
\label{fig:pareto}
\end{figure}

\subsection{Adaptive agent and baselines}\label{sec:agent}

The adaptive attacker is a value-based reinforcement-learning agent featuring double and duelling value estimation, $n$-step returns, and prioritised replay~\cite{hessel2018rainbow}. It employs invalid-action masking~\cite{huang2022maskableppo} to dynamically exclude circuits that would drop the ten-round mean below the threshold. Sharing this filter with the greedy oracle, the agent must optimally select among the remaining safe actions to maximise future rewards. 

Hyperparameters are identical across both protocols, which differ only by channel, threshold, and information measure. Training applies either a strict (heavy) or tolerated (moderate) detection penalty. As is standard, results report the best evaluation checkpoint, with variance detailed in Sec.~\ref{sm:drift}. Baselines include a passive attacker, the reference policies defined above, and for E91, a masked policy-gradient agent~\cite{schulman2017ppo, huang2022maskableppo} alongside behavioural cloning of the dynamic programming policy~\cite{ross2011dagger}. Unmasked policy-gradient agents are compared in Figs.~\ref{fig:bb84methods} and~\ref{fig:e91methods}.

\subsection{Value of adaptation on BB84}\label{sec:drift_bb84}

As detailed in Table~\ref{tab:results}, the strict adaptive agent on BB84 reaches $\FAE = 0.7434$ with zero detection. This matches the greedy oracle and achieves $98.8\%$ of the zero-detection upper bound, $V_{\mathrm{DP}} = 0.7527$ (Figs.~\ref{fig:bb84} and~\ref{fig:bb84methods}). The ``static'' baseline in the released environment actually adapts to drift by selecting the most conservative library level nearest the current noise anchor, reaching $\FAE = 0.7197$. The adaptive agent outperforms this conservative rule by $0.024$, capturing roughly three-quarters of the theoretical $0.033$ headroom available up to $V_{\mathrm{DP}}$. The remaining $0.009$ gap represents the value of anticipating the episode budget. The best genuinely fixed circuit scores below $0.7197$. Finally, unmasked policy-gradient agents plateau roughly $0.02$ below the oracle.

\begin{table}[t]
\caption{Value of adaptation. Mean information per round and detection rate, alongside percentages of the zero-detection dynamic-programming upper bound, $V_{\mathrm{DP}}$, evaluated over the same library. BB84 values are averaged over $50$ evaluation episodes. E91 agent and oracle values are averaged over $200$ episodes, while its fixed-circuit and dynamic-programming baselines are evaluated over $2000$ episodes. The E91 upper bound is resolved computationally to a grid precision of $0.001$ (Sec.~\ref{sm:drift}). The BB84 conservative lookup policy is not a fixed circuit; it dynamically selects the safest library level anchored nearest the current noise level.}
\label{tab:results}
\begin{ruledtabular}
\setlength{\tabcolsep}{3pt}
\begin{tabular}{llccc}
 & Policy & $\FAE/\Holevo$ & Det. & \% of $V_{\mathrm{DP}}$ \\
\colrule
\multirow{5}{*}{BB84}
 & Conservative lookup & 0.7197 & 0\% & 96 \\
 & Greedy oracle             & 0.7434 & 0\% & 99 \\
 & Adaptive, strict          & 0.7434 & 0\% & 99 \\
 & Adaptive, tolerated       & 0.7477 & 4\% & 99 \\
 & Dyn.\ prog., $V_{\mathrm{DP}}$ & 0.7527 & 0\% & 100 \\
\colrule
\multirow{5}{*}{E91}
 & Best fixed circuit & 0.135 & 0\% & 38 \\
 & Greedy oracle             & 0.3507 & 0\% & 99 \\
 & Adaptive                  & 0.3484 & 0\% & 98 \\
 & Dyn.\ prog., coarse & 0.3540 & 7\% & 99 \\
 & Dyn.\ prog., $V_{\mathrm{DP}}$ & 0.356 & 0\% & 100 \\
\end{tabular}
\end{ruledtabular}
\end{table}

This $0.024$ separation between the conservative rule and the adaptive agent quantifies the impact of drift on BB84. A threat model that confines the attacker to the safest circuit at each noise level undercounts her information by $0.024$ in $\FAE$, and assuming a single fixed circuit undercounts it even further. Furthermore, these drift experiments enforce the basis-averaged constraint $\FAB \geqslant 0.87$. Because per-basis monitoring substantially alters the attacker's attainable information (Sec.~\ref{sec:asymmetry}), evaluating this separation under per-basis constraints is left for future measurements.

\begin{figure}[t]
\centering
\includegraphics[width=\linewidth]{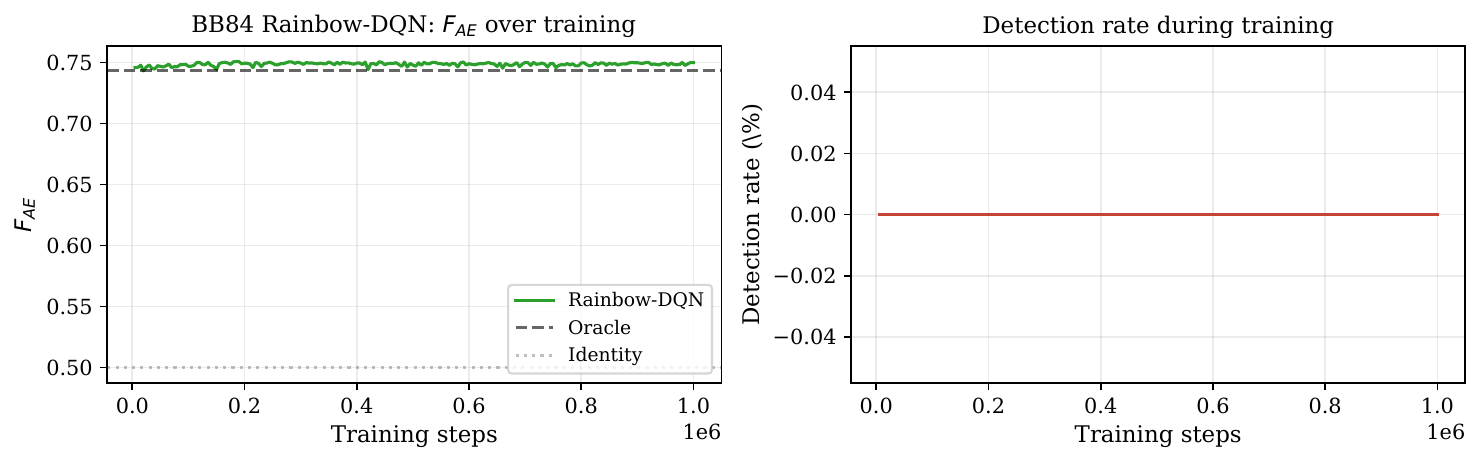}
\caption{Adaptive attack on BB84 under a drifting device. Upper panel: attacker information $\FAE$ during evaluation, reaching the greedy oracle (dashed) at its best checkpoints and well above the passive baseline (dotted). Lower panel: detection rate, which stays at zero at those checkpoints.}
\label{fig:bb84}
\end{figure}

\begin{figure}[t]
\centering
\includegraphics[width=\linewidth]{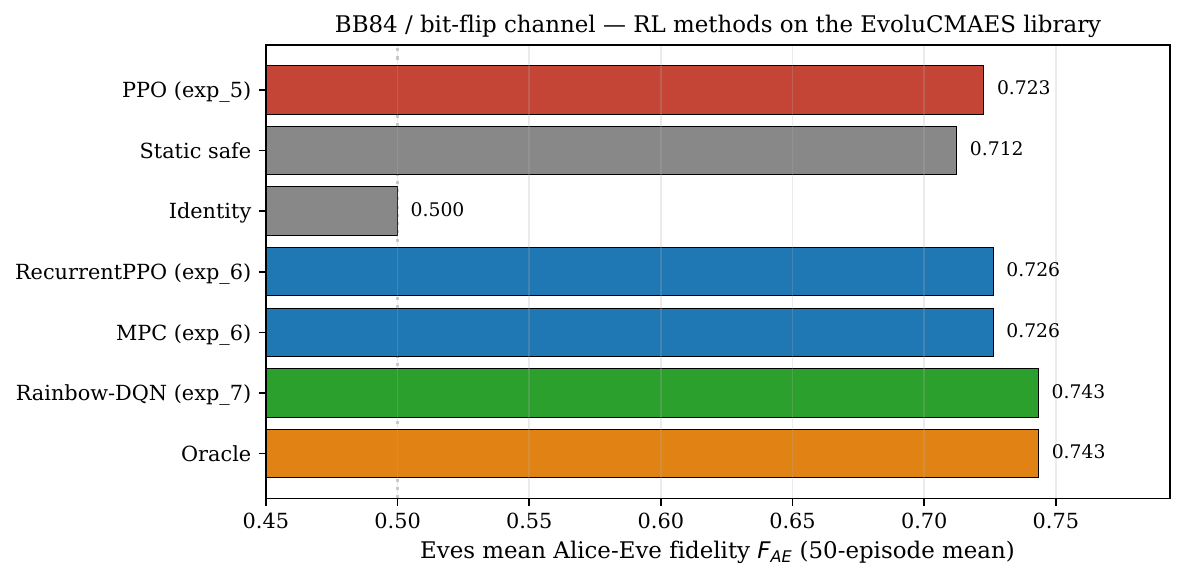}
\caption{BB84 under a drifting bit-flip channel: final evaluation $\FAE$ across policies, strict-feasibility variants only. The value-based agent matches the greedy oracle at $0.7434$, unmasked policy-gradient methods plateau around $0.02$ below it, and the noise-indexed conservative lookup (``Static safe'') reaches $0.7197$.}
\label{fig:bb84methods}
\end{figure}

\subsection{Value of adaptation on E91}\label{sec:drift_e91}

On E91, adaptation provides a massive advantage. To survive all $2000$ evaluation episodes without detection, the best fixed circuit must conservatively keep the episode mean of $S$ above $\tau_S$ even on the noisiest trajectories, yielding only $\Holevo = 0.135$. In contrast, the adaptive agent reaches $0.3484$ at zero detection -- within $2\%$ of the $V_{\mathrm{DP}} \approx 0.356$ upper bound (Figs.~\ref{fig:e91training} and~\ref{fig:e91methods}). Adaptation thus multiplies the attacker's extracted information by approximately $2.6$ without increasing detection risk.

The agent's slight $0.0023$ shortfall compared to the greedy oracle requires paired statistical analysis. Because all E91 policies evaluate the same per-episode channel realisations, paired observations eliminate the channel-to-channel variance that would otherwise obscure such small effect sizes. Across $n = 200$ paired episodes, the agent trails the greedy oracle by $0.0023$ ($95\%$ confidence interval $[0.0018, 0.0028]$) and outperforms the masked policy-gradient agent by $0.0017$ ($[0.0012, 0.0022]$). Both differences are highly significant under a two-sided Wilcoxon signed-rank test~\cite{wilcoxon1945individual} (Table~\ref{tab:stats}). Meanwhile, unmasked policy-gradient agents collapse to $\Holevo \approx 0.26$--$0.28$, failing to explore effectively and confining themselves to just three or four distinct actions.

Finally, the released dynamic-programming experiments reach $0.3540$ but incur a $7\%$ detection rate. This detection is strictly an artefact of discretising the accumulated statistic onto a coarse grid. Recalculating on a fine grid yields a zero-detection optimum of $0.356 \pm 0.001$ (Sec.~\ref{sm:drift}). Because this fine-grid upper bound sits above the coarse run, it confirms that dynamic programming outperforms the agent through optimal decision-making, not by trivially tolerating detection.

\begin{table*}[t]
\caption{Paired per-episode comparisons on E91, $n = 200$ episodes with shared channel seeds. Two-sided Wilcoxon signed-rank $p$-values~\cite{wilcoxon1945individual}.}
\label{tab:stats}
\begin{ruledtabular}
\begin{tabular}{lccc}
Comparison & Mean difference & $95\%$ CI & Wilcoxon $p$ \\
\colrule
Adaptive $-$ greedy oracle                  & $-0.0023$ & $[-0.0028, -0.0018]$ & $2.5 \times 10^{-15}$ \\
Adaptive $-$ masked policy gradient         & $+0.0017$ & $[+0.0012, +0.0022]$ & $1.0 \times 10^{-8}$ \\
Adaptive $-$ dynamic programming (7\% det.) & $-0.0056$ & $[-0.0065, -0.0047]$ & $1.3 \times 10^{-21}$ \\
\end{tabular}
\end{ruledtabular}
\end{table*}

\begin{figure}[t]
\centering
\includegraphics[width=\linewidth]{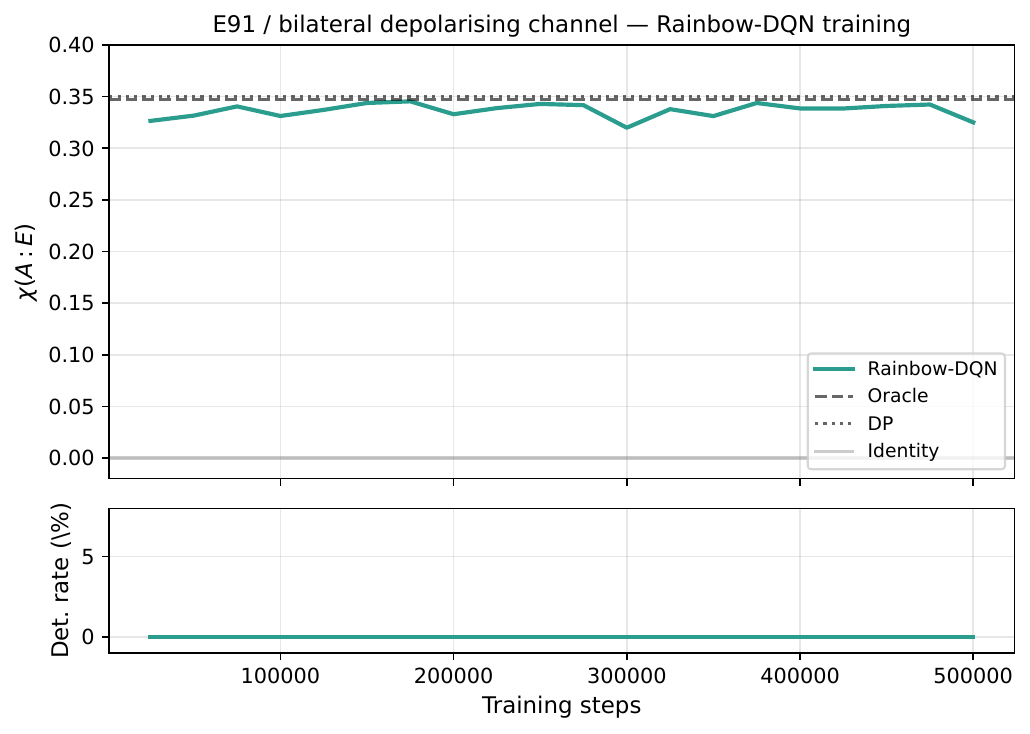}
\caption{E91 under the bilateral depolarising channel: training curve for the adaptive attack. Upper panel, Holevo information during evaluation against the greedy oracle (dashed), the coarse-grid dynamic-programming run (dotted) and the passive baseline (grey); lower panel, detection rate.}
\label{fig:e91training}
\end{figure}

\begin{figure}[t]
\centering
\includegraphics[width=\linewidth]{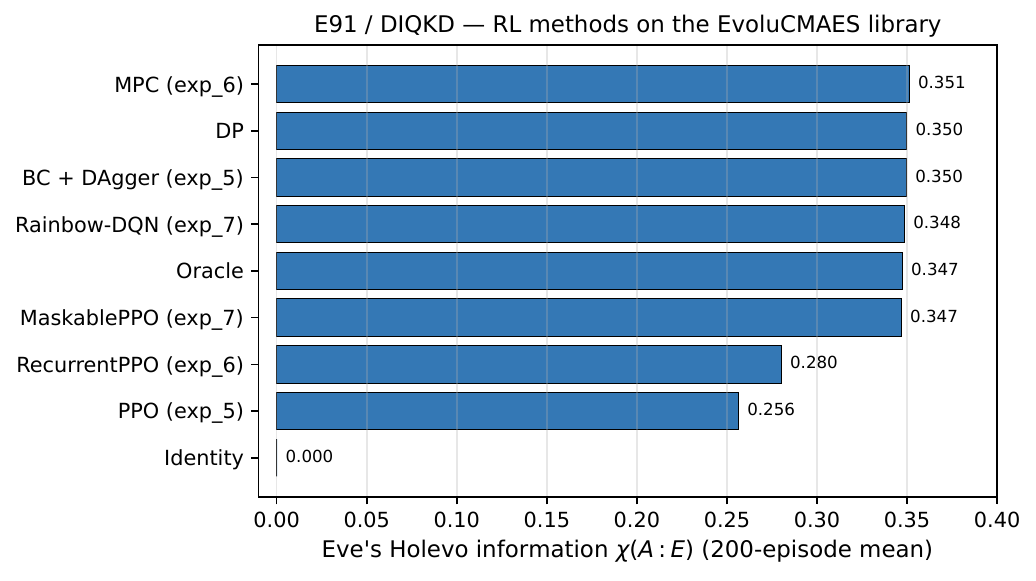}
\caption{E91/DIQKD: final evaluation $\Holevo$ across policies. The masked value-based agent reaches $0.3484$ against a greedy oracle at $0.3507$ and a coarse-grid dynamic-programming run at $0.3540$ with $7\%$ detection, whereas unmasked policy-gradient methods collapse to $\Holevo \approx 0.26$--$0.28$ by confining themselves to three or four distinct actions.}
\label{fig:e91methods}
\end{figure}

\subsection{Tolerance for detection}\label{sec:thresholds}

On BB84 the agent trained with the tolerated penalty reaches $\FAE = 0.7477$ at a $4\%$ detection rate, $0.0043$ above the strict agent. Since $V_{\mathrm{DP}} = 0.7527$ is reached without detection, accepting detection is not required to obtain that information; the gain reflects the easier learning problem posed by a softer penalty and not a trade-off inherent to the task. On E91 the tolerated agent coincides with the strict one. The drift range there stops $0.006$ short of $p \approx 0.070$, where the honest CHSH value falls to $\tau_S$, just below the noise level $p_{\mathrm{cliff}}$ of Eq.~\eqref{eq:cliff} that quantum theory and the key-rate formula fix, whereas on BB84 the user chooses the threshold independently of the channel. It remains untested whether this difference in threshold geometry dictates how an attacker trades detection risk for information gain. Sweeping the detection penalty within a fine-grid dynamic programme would resolve this question.

\section{Discussion}\label{sec:discussion}

Explicit attacks serve the defender by providing empirically testable provisioning margins, whereas margins calibrated against conjectured strategies cannot be directly verified. However, this framework is designed to complement, rather than replace, formal security proofs. Because the search identifies feasible points without providing formal certificates, it yields constructive lower bounds on the eavesdropper's information. Semidefinite relaxations and entropic bounds remain the appropriate instruments for proving rigorous upper bounds on the key rate. 

The adaptive threat models evaluated here are deliberately scoped. Our values assume asymptotic detection statistics and an attacker who perfectly tracks the current noise level. The impact of delayed noise estimation or finite-sample fluctuations on the adaptive advantage remains an open question for future work. Furthermore, the specific BB84 drift library evaluated here restricts the attacker to a one-qubit memory and imposes a basis-averaged constraint. While Proposition~\ref{prop:ordering} structurally accommodates per-basis constraints, Sec.~\ref{sec:asymmetry} demonstrates that per-basis monitoring alters the attacker's attainable information, making the quantification of this separation an important next step.

Finally, establishing constructive bounds is established practice, validating theoretical limits such as the device-independent collective-attack rate~\cite{acin2007diqkd, pironio2009diqkd} and device-dependent rates for imperfect sources~\cite{woodhead2014tight}. Our analysis aligns with standard theoretical conventions: it assumes flawless physical implementation (leaving existing side-channel security proofs intact), operates strictly at the qubit level, and weights the four sifted states equally, corresponding to an unbiased basis choice.

\section{Practical consequences}\label{sec:practical}

Historically, device drift has entered the provisioning of a QKD link merely as a source of error that reduces the secure key rate. However, under the trusted-noise threat model, our measurements demonstrate that drift also dynamically dictates the amount of information an attacker can extract. For device-independent E91, an attacker adapting across $26$ compact circuits holds $2.6$ times the Holevo information of the best single circuit at zero detection. On BB84, she outperforms a conservative noise-indexed rule by $0.024$ in fidelity.

A committed attacker is constrained by the noisiest stretch of the channel she must traverse undetected. When this operational requirement spans a full recalibration interval, the relevant excursion is the largest one within that window -- a value that scales with the interval's length for an Ornstein--Uhlenbeck process. Consequently, the recalibration interval heavily impacts the attacker's attainable information, a vulnerability that static analyses overlook.

The chosen monitoring convention exerts a comparable impact. On a stationary channel concentrating its error in one basis at $\varepsilon = 0.10$, enforcing individual per-basis thresholds rather than an averaged threshold lowers the attacker's attainable fidelity by $0.023$ and reverses her gain from basis asymmetry -- all without requiring additional measurements by the legitimate parties. 

Ultimately, this computational sequential-decision apparatus applies to operating points that lack closed-form analyses, including alternative drift processes and finite-key regimes constrained by confidence intervals. Wherever a formal security proof establishes an upper bound on an eavesdropper's information, explicit adaptive constructions of this kind provide the rigorous lower bound beneath it.

\begin{acknowledgments}
The authors thank N. L\"utkenhaus for advice on security models and P. Brown for connecting the work to upper bounds on key rates. Code, analysis notebooks and the channel sweeps reported here are available upon request and will be released under the MIT licence at the end of peer review. Generative AI tools assisted with implementing experiments and editing the text.
\end{acknowledgments}

\bibliographystyle{apsrev4-2}
\bibliography{references}

\clearpage
\onecolumngrid

\begin{center}
{\large \textbf{Supplemental Material}}\\[2pt]
{\textit{Learnt Attacks on Quantum Key Distribution under Channel Noise and Device Drift}}\\[4pt]
Marcel Mordarski, Daniel Budina, Benjamin Gras, Abdelrahman Shehata, and Roberto Bondesan
\end{center}

\vspace{1em}

\setcounter{section}{0}
\setcounter{equation}{0}
\setcounter{figure}{0}
\setcounter{table}{0}
\setcounter{secnumdepth}{2}

\renewcommand{\theequation}{S\arabic{equation}}
\renewcommand{\thefigure}{S\arabic{figure}}
\renewcommand{\thetable}{S\arabic{table}}
\renewcommand{\thesection}{S\arabic{section}}
\renewcommand{\thesubsection}{\thesection.\arabic{subsection}}

\makeatletter
\renewcommand{\p@subsection}{}
\makeatother

\section{Learnt attacks on a fixed template: a review of Ref.~\cite{decker2024qkd}}
\label{sm:decker}

As the present work extends Ref.~\cite{decker2024qkd} directly, that work is summarised here in the notation of the main text.

That work recasts eavesdropping on BB84 as a variational learning problem. Alice sends one of the four BB84 states, and the attacker applies a unitary $U(\Theta)$ jointly to the travelling qubit and an ancilla she keeps, forwards the travelling qubit to Bob, and after the public basis announcement applies a second unitary $V(\Lambda)$ to her memory before measuring it. Both unitaries come from a hardware-efficient template, a fixed alternation of single-qubit rotations and \textsc{cnot}s whose depth and connectivity are chosen in advance. Only the rotation angles $(\Theta, \Lambda)$ vary during training, which proceeds by Adam~\cite{kingma2014adam} on the loss of Eq.~\eqref{eq:decker} of the main text, the quadratic term pinning $\FAB$ to a chosen target $f$ and the linear term maximising $\FAE$. Fidelities follow $F(\rho_A, \rho_B) = \left(\Tr\sqrt{\sqrt{\rho_A}\,\rho_B\sqrt{\rho_A}}\right)^2$, with $\FAB$ averaged over the four sifted states.

Three findings are relevant here. Firstly, on an ideal channel the learner converges to the phase-covariant cloning machine~\cite{bruss2000phasecov}, which validates the variational apparatus against a known answer. Secondly, adding a bit-flip to the channel, applied before the attacker's interaction in that work, breaks that cloner's optimality, since an imbalanced construction with cloning coefficients $C_{AB,Z} = \sin\psi$ and $C_{AB,X} = \cos\phi$, carrying two free parameters in place of one, overtakes it whenever the channel's Kraus weights satisfy $\alpha\gamma \neq \beta\delta$ in the notation of Ref.~\cite{decker2024qkd}, giving up fidelity in the damaged basis for fidelity in the untouched one. Thirdly, treating the parity information leaked during Cascade reconciliation~\cite{brassard1993secret} as available to the attacker permits a joint measurement over stored ancilla pairs, which raises her recovery probability from $0.810$ to the Helstrom optimum $0.894$ at $\FAB = 0.892$, the $11\%$ error bound~\cite{helstrom1969quantum}.

The present work retains the threat model and the figures of merit, with the fidelities evaluated as the recovery probabilities defined in Sec.~\ref{sec:setting}, and modifies two elements. The circuit template is removed, so that gate structure is searched (Sec.~\ref{sm:framework}), and the single stationary noise level is replaced by a sweep over channel families (Secs.~\ref{sec:stationary} and~\ref{sm:sweeps}) and by an explicitly time-varying channel (Sec.~\ref{sec:drift}). Reference~\cite{decker2024qkd} demonstrated particular departures from the textbook attack at one noise level. The objective here is different, namely to measure what an attacker gains by adapting to a channel that moves, and the learnt attacks enter as the actions of that sequential problem. That use requires an optimiser that does not presuppose the structure of the solution and a library of solutions small enough to select from round by round.

\section{The two-loop search procedure}
\label{sm:framework}

The outer loop maintains a population of $\{T_i, \theta_i\}$ pairs. Individuals are ranked by fitness each generation, the top-ranked are retained under elitism, and the individuals retained by the selection step are perturbed by a topology-mutation operator that adds, removes or replaces one gate, with addition and removal matched so that the operator carries no net growth bias. The inner loop optimises the angles of each new or modified topology by a gradient-free routine at a fixed budget of $B$ fitness evaluations, covariance-matrix adaptation~\cite{Hansen2001, hansen2016cmaes, hansen2003cmaesreduced} for the BB84 sweeps and simultaneous-perturbation stochastic approximation~\cite{spall1992multivariate} for the E91 library. Each query is a density-matrix simulation of a non-convex objective without available gradients, which favours this class of inner optimiser~\cite{hansen2003cmaesreduced}; Table~\ref{tab:inner} confirms that the two routines, and uniform random sampling, return the same fidelity and gate count on a shared task.

The gate alphabet, the fitness and the inner optimiser are the only elements that change between the protocols. For BB84 the fitness is $\FAE$ subject to $\FAB \geqslant 0.87$ enforced as a feasibility mask, with covariance-matrix adaptation on the angles; for E91 it is $\Holevo - \alpha(S - S_{\mathrm{target}})^2$ with $\alpha = 100$ on a three-qubit hardware-efficient preparation, with simultaneous-perturbation stochastic approximation on the angles, which drives each library cell to a prescribed CHSH value. The population dynamics, selection and elitism of the outer loop are shared, and Table~\ref{tab:inner} shows that the inner optimiser is immaterial at fixed budget, the two reaching the same $\FAE$ and the same equilibrium gate count on the BB84 cloning task.

The evaluation budget also sets how large the returned circuits are, which keeps the library small enough to serve as a discrete action set in Sec.~\ref{sec:drift}. Enlarging $T$ at fixed $B$ enlarges $\dim(\theta)$, so the inner loop covers more dimensions with the same number of queries, and the fitness estimate returned to the outer loop is noisier for larger topologies. Rank-based selection comparing noisy estimates then favours the smaller of two topologies sharing a noiseless optimum, and the population settles below the dense baseline without an explicit size penalty. The size at which it settles is measured (Table~\ref{tab:inner}) and not derived; the mechanism suggests that it is set by the size at which the variance of the fitness estimate at budget $B$ becomes comparable to the fitness separation that selection must resolve. Circuits from the exact-evaluation sweeps of Sec.~\ref{sm:sweeps} run slightly larger, four to eight gates against four to six, since exact evaluation removes the selection pressure towards small circuits.

Table~\ref{tab:inner} reports the BB84 cloning task under three inner-loop optimisers with the outer loop, mutation operator and selection rule held fixed, with one fitness query counting as one unit for every method. Random sampling receives twice the budget to offset its lack of adaptation.

\begin{table}[htbp]
\caption{Inner-loop optimiser comparison on the BB84 cloning task; mean $\pm$ standard deviation over five seeds. All three recover $\FAE$ to within $0.01$ of one another and reach the same equilibrium of four to six gates, so the discovered circuit size is set by the budget and is insensitive to the choice of algorithm.}
\label{tab:inner}
\begin{ruledtabular}
\begin{tabular}{lccccc}
Inner optimiser & $\FAE$ & $\FAB$ & Gates & Evaluations & Best gates \\
\colrule
CMA-ES                          & $0.764 \pm 0.001$ & $0.794 \pm 0.000$ & $5.2 \pm 0.7$ & 2000 & 4 \\
Stochastic perturbation (SPSA) \cite{spall1992multivariate}  & $0.760 \pm 0.003$ & $0.794 \pm 0.002$ & $6.4 \pm 0.8$ & 2000 & 6 \\
Uniform random                  & $0.756 \pm 0.006$ & $0.796 \pm 0.002$ & $5.6 \pm 1.0$ & 4000 & 6 \\
\end{tabular}
\end{ruledtabular}
\end{table}

\section{Validation against known closed forms}
\label{sm:validation}

This section collects the controls that fix the accuracy of the search against analytically known answers. None of the closed forms below is supplied to the search, which starts from random gate sequences in every run, and the results validate the apparatus used in the main text.

\subsection{Reproduction of analytical cloners}\label{sec:analytical_cloners}

On a noiseless channel the optimal individual attack satisfies $\FAE = \tfrac{1}{2} + \sqrt{\FAB(1-\FAB)}$, the one-angle family of Refs.~\cite{fuchs1997optimal, cerf2000cloning}, which balances at $\FAB = \FAE \approx 0.853$. The expression holds along the whole family, so the thresholds tested lie on one analytically known curve and the exercise constitutes one control repeated at four abscissae. Starting from random gate sequences, the search saturates the envelope to four decimal places at every threshold tested, attaining $100.00\%$ of the analytical value at $\FAB \in \{0.95, 0.90, 0.87, 0.85\}$ (Table~\ref{tab:validation}). The objective contains no gate-count penalty, and the solutions nonetheless occupy four to eight gates drawn from a space admitting arbitrary depth, for the reason given in Sec.~\ref{sm:framework}.

\begin{table}[htbp]
\caption{Validation on the noiseless symmetric channel. The search receives no analytical input and recovers the closed form $\FAE = \tfrac{1}{2} + \sqrt{\FAB(1-\FAB)}$ at every threshold tested.}
\label{tab:validation}
\begin{ruledtabular}
\begin{tabular}{ccccc}
Threshold $\tau$ & $\FAB$ attained & $\FAE$ attained & Closed form & Ratio \\
\colrule
0.95 & 0.9500 & 0.7179 & 0.7179 & $1.0000$ \\
0.90 & 0.9000 & 0.8000 & 0.8000 & $1.0000$ \\
0.87 & 0.8700 & 0.8363 & 0.8363 & $1.0000$ \\
0.85 & 0.8500 & 0.8571 & 0.8571 & $1.0000$ \\
\end{tabular}
\end{ruledtabular}
\end{table}

\subsection{Collective attacks and the asymptotic key rate}\label{sec:key_rate_bounds}

Any attack consistent with the observed statistics is a valid eavesdropping strategy, so the key rate that remains under a specific attack bounds the universally secure rate from above, $r \leqslant 1 - h(Q) - \Holevo(A{:}E)$ by the Devetak--Winter relation~\cite{devetak2005distillation} (Sec.~\ref{sm:keyrates}). Running the search to maximise $\Holevo$ in place of $\FAE$ on the symmetric channel, whose quantum bit error rate is $Q$ in each basis, returns $\Holevo = h(Q)$ to four decimal places at $Q \in \{0.02, 0.04, 0.06, 0.11\}$, hence $r \leqslant 1 - 2h(Q)$ with a residual below $10^{-4}$ and the rate vanishing at $Q = 0.11$ (Table~\ref{tab:keyrate}). The rate $1 - 2h(Q)$ is the symmetric case of the Shor--Preskill rate~\cite{shor2000simple}, which the construction reproduces from above without having been supplied with it.

\begin{table}[htbp]
\caption{Key-rate upper bounds from constructed attacks on the symmetric channel. The search maximises $\Holevo$ behind the same feasibility constraint and is given no analytical input; it returns $\Holevo = h(Q)$ at every $Q$ tested, so the upper bound it produces coincides with the symmetric Shor--Preskill rate $1 - 2h(Q)$~\cite{shor2000simple} to within $10^{-4}$.}
\label{tab:keyrate}
\begin{ruledtabular}
\setlength{\tabcolsep}{4pt}
\begin{tabular}{ccccc}
$Q$ & $\Holevo$ attained & $h(Q)$ & $r \leqslant 1-h(Q)-\Holevo$ & $1-2h(Q)$ \\
\colrule
0.02 & 0.1414 & 0.1414 & 0.7171 & 0.7171 \\
0.04 & 0.2423 & 0.2423 & 0.5155 & 0.5154 \\
0.06 & 0.3274 & 0.3274 & 0.3451 & 0.3451 \\
0.11 & 0.4999 & 0.4999 & 0.0002 & 0.0002 \\
\end{tabular}
\end{ruledtabular}
\end{table}

\subsection{Comparison with a reliable numerical lower bound}\label{sec:benchmarking}

The natural test of a key-rate upper bound is whether it meets the lower bound that reliable numerical methods supply. Winick, L\"utkenhaus and Coles~\cite{winick2018reliable} bound the device-dependent rate from below by minimising the relative entropy between the post-interaction state and its pinching in the key basis, over every state compatible with the observed statistics. Given the same input, namely the per-basis error rates $(Q_Z, Q_X)$ of a BB84 round with the eavesdropper held responsible for all of the disturbance, the two methods bound the same quantity from opposite sides. Across four operating points running from the symmetric channel to one carrying almost all of its error in a single basis, the constructed attack extracts $\Holevo(A_Z{:}E) = h(Q_X)$, and the upper bound it certifies agrees with the numerical lower bound to within $3\times10^{-4}$ in rate (Table~\ref{tab:wlc-full}). Both bounds reproduce the Shor--Preskill rate $1 - h(Q_Z) - h(Q_X)$~\cite{shor2000simple}, which the Bell-diagonal reduction of Sec.~\ref{sm:wlc} returns analytically and which an eavesdropper holding a purification of the minimising state attains, so the agreement validates both methods against a known answer along the whole asymmetry axis.

In a first pass, two of the four operating points yielded upper bounds above the numerical lower bound by $0.07$ and $0.04$ in rate. Because the lower bound is a rigorous bound, the discrepancy could lie only in the search, and raising the evaluation budget from $400$ to $900$ evaluations per candidate and the outer loop from $18$ to $24$ generations closed both gaps to below $10^{-4}$ and returned smaller circuits. Computing an upper bound and a lower bound through independent methods therefore yields a convergence diagnostic that neither approach offers alone, and Sec.~\ref{sm:wlc} details the comparison.

\section{Channel sweeps: specification and results}
\label{sm:sweeps}

These sweeps use exact density-matrix evaluation, so the frontier they trace carries no estimator noise. Alice's four sifted states are prepared in turn, the attacker's unitary acts on the travelling qubit together with a two-qubit ancilla unless stated otherwise, and the channel then acts on the travelling qubit alone, leaving the attacker's reduced state untouched, so that the two figures of merit decouple and follow from the two marginals separately. Bob's fidelity is the mean over the four states of his recovery probability in the matched basis, and the attacker's fidelity is the mean over the two bases of the Helstrom success probability of Eq.~\eqref{eq:helstrom}. Each configuration runs from three seeds at population $14$, $18$ generations, elitism $4$ and inner budget $B = 350$, and the best feasible run is reported. The attainable values agree with the closed form of Eq.~\eqref{eq:composed} to within $2\times10^{-4}$ at every entry, the largest deviation being a shortfall of $2\times10^{-4}$ on the symmetric channel at $\varepsilon = 0.06$, where the closed form gives $0.8083$.

\begin{table}[htbp]
\caption{Attainable $\FAE$ against total honest channel error $\varepsilon$ at threshold $\FAB \geqslant 0.87$, for a channel placing all error in one basis ($p_x = \varepsilon$) and one splitting it evenly ($p_x = p_z = \varepsilon/2$). Both present honest fidelity $1 - \varepsilon/2$, so the noiseless closed form evaluated at the observed $\FAB$ assigns both the same value, $0.8363$. The final column is the excess of that closed form over the attainable value under the trusted-noise convention, and measures the conservatism of the closed form for a physically restricted attacker.}
\label{tab:sweep}
\begin{ruledtabular}
\begin{tabular}{cccccc}
$\varepsilon$ & Honest $\FAB$ & Symmetric & One-basis & Asymmetry gain & Closed-form excess \\
\colrule
0.02 & 0.990 & 0.8278 & 0.8278 & $<0.0001$ & $+0.0085$ \\
0.06 & 0.970 & 0.8081 & 0.8087 & $+0.0006$ & $+0.0276$ \\
0.10 & 0.950 & 0.7846 & 0.7858 & $+0.0012$ & $+0.0505$ \\
0.16 & 0.920 & 0.7366 & 0.7400 & $+0.0034$ & $+0.0963$ \\
0.22 & 0.890 & 0.6581 & 0.6640 & $+0.0059$ & $+0.1723$ \\
\end{tabular}
\end{ruledtabular}
\end{table}

Sweeping the split of a fixed $\varepsilon = 0.10$ between the bases in six steps moves the attainable value monotonically from $0.7846$ to $0.7858$, with a cross-seed spread near $0.001$ on the best-of-three statistic. Transplanting the circuit optimised on the symmetric channel onto the one-basis channel leaves it feasible at $\FAB = 0.8700$ and yields $\FAE = 0.7846$, short of the re-optimised $0.7858$ by $0.0012$.

\subsection{Per-basis repetition}
\label{sm:perbasis}

Every value in Table~\ref{tab:sweep} is obtained under a constraint on the basis-averaged fidelity, $\FAB \geqslant 0.87$, with the channel noise trusted. The untrusted per-basis rows of Table~\ref{tab:perbasis} in the main text use the same search configuration with the constraint applied to both rates independently and the attacker held responsible for the whole disturbance, so that her circuit must reproduce the observed per-basis rates $Q_Z = p_x$ and $Q_X = p_z$ to within $2\times10^{-3}$. The values returned coincide to within $10^{-4}$ with $\tfrac{1}{2} + \tfrac{1}{2}\sum_b\sqrt{Q_b(1-Q_b)}$ evaluated with each observed rate raised by $2\times10^{-3}$, the untrusted counterpart of Eq.~\eqref{eq:composed}. The trusted per-basis rows are evaluated from Eq.~\eqref{eq:composed} with $Q_Z \leqslant 0.13$ and $Q_X \leqslant 0.13$ imposed separately. At $\varepsilon = 0.10$ this gives $D_Z \leqslant 0.0375$ and $D_X \leqslant 0.13$ on the one-basis channel, hence $\FAE = 0.7631$, against $0.7846$ on the symmetric channel, where the two constraints coincide. Because the untrusted attacker is credited with the channel's noise as her own, absolute values are compared only between rows sharing a noise convention.

\begin{table}[htbp]
\caption{Amplitude damping, a non-Pauli channel outside the Pauli-channel cloning solutions of Refs.~\cite{cerf2000cloning, kerstan2026pauli}, and attacker ancilla size at the one-basis operating point ($\varepsilon = 0.10$). Both at threshold $\FAB \geqslant 0.87$.}
\label{tab:ad}
\begin{ruledtabular}
\begin{tabular}{cccc@{\qquad\qquad}ccc}
$\gamma$ & Honest $\FAB$ & $\FAE$ & Gates & Ancilla & $\FAE$ & Gates \\
\colrule
0.05 & 0.9812 & 0.8197 & 5 & 1 qubit  & 0.7850 & 6 \\
0.10 & 0.9622 & 0.7997 & 4 & 2 qubits & 0.7858 & 8 \\
0.15 & 0.9430 & 0.7751 & 4 & 3 qubits & 0.7858 & 5 \\
0.20 & 0.9236 & 0.7437 & 5 &          &        &   \\
\end{tabular}
\end{ruledtabular}
\end{table}

\section{Key-rate upper bounds and the existing literature}
\label{sm:keyrates}

\subsection{Constructed attacks as upper bounds}

Let $\rho_{ABE}$ be the state after the eavesdropper's interaction, with Alice's sifted bit $A$ uniform. Against collective attacks, the asymptotic one-way rate is the Devetak--Winter quantity~\cite{devetak2005distillation}
\begin{equation}
r = \inf_{\mathcal{E} \in \mathcal{C}} \left[ H(A|E)_{\mathcal{E}} - H(A|B)_{\mathcal{E}} \right],
\label{eq:dw}
\end{equation}
in which the infimum is taken over the set $\mathcal{C}$ of eavesdropping strategies compatible with the statistics Alice and Bob observe. A proof establishes a lower bound on that infimum by bounding $H(A|E)$ from below for every admissible $\mathcal{E}$. A constructed attack proceeds in the opposite direction, since exhibiting one particular $\mathcal{E}^\star \in \mathcal{C}$ evaluates the bound at a single point, and an infimum never exceeds any of its arguments, so that
\begin{equation}
r \;\leqslant\; H(A|E)_{\mathcal{E}^\star} - H(A|B)_{\mathcal{E}^\star}
\;=\; 1 - h(Q) - \Holevo(A{:}E)_{\mathcal{E}^\star}.
\label{eq:upper}
\end{equation}
The last equality uses $H(A|E) = H(A) - \Holevo(A{:}E) = 1 - \Holevo$ for a uniform sifted bit and $H(A|B) = h(Q)$ for a binary symmetric reconciliation channel at error rate $Q$. Every attack exhibited therefore certifies an upper bound on the key rate, and a stronger attack yields a lower upper bound. No feasibility question arises, because $\mathcal{E}^\star$ is compatible with the observed statistics by construction, the feasibility constraint of the search enforcing that compatibility.

The Holevo quantity is computed directly from the simulator. For each basis the attacker's two conditional memory states $\rho_E^{(0)}$ and $\rho_E^{(1)}$ are obtained as marginals of the post-interaction pure state, and
\begin{equation}
\Holevo = S\!\left(\tfrac{1}{2}\rho_E^{(0)} + \tfrac{1}{2}\rho_E^{(1)}\right)
- \tfrac{1}{2}S\!\left(\rho_E^{(0)}\right) - \tfrac{1}{2}S\!\left(\rho_E^{(1)}\right),
\label{eq:holevo}
\end{equation}
averaged over the two sifted bases, with $S(\cdot)$ the von Neumann entropy. Running the outer search against $\Holevo$ in place of $\FAE$, behind the same feasibility constraint, returns the values of Table~\ref{tab:keyrate}, with $\Holevo = h(Q)$ at every $Q$ tested, so that Eq.~\eqref{eq:upper} collapses onto $r \leqslant 1 - 2h(Q)$ and the rate is driven to zero at $Q = 0.11$. Neither the closed form nor the $11\%$ threshold was supplied to the search.

Constructive upper bounds of this kind are established practice. Explicit attacks saturate the device-independent collective-attack rate~\cite{acin2007diqkd, pironio2009diqkd, primaatmaja2023review}, attain device-dependent rates for families of imperfect sources~\cite{woodhead2014tight}, and bound device-independent rates through convex combinations of local and nonlocal behaviours~\cite{lukanowski2023convex}, and entropic upper bounds on device-dependent rates are available in closed form~\cite{kraus2005lower, renner2005information, moroder2006upper}. The present construction differs from those works in obtaining the attack by a search over circuits that takes no analytical input, which is what allows the same apparatus to operate on channels and drift processes for which no closed form exists.

\subsection{Comparison with semidefinite relaxations}
\label{sm:npa}

To render the comparison with certified methods quantitative, a small Navascu\'es--Pironio--Ac\'in relaxation for the CHSH scenario was implemented directly~\cite{navascues2008npa}. Alice and Bob hold dichotomic observables $A_0, A_1, B_0, B_1$ with $A_i^2 = B_j^2 = \mathbb{I}$ and $[A_i, B_j] = 0$. A monomial is a pair of words over the two alphabets, reduced by cancelling adjacent repeated letters, and the moment matrix carries entries $\Gamma_{uv} = \langle u^\dagger v \rangle$ indexed by a monomial set. Moments are taken real, the usual relaxation for CHSH. Requiring $\Gamma \succeq 0$ with $\Gamma_{11} = 1$ and maximising $\langle A_0B_0\rangle + \langle A_0B_1\rangle + \langle A_1B_0\rangle - \langle A_1B_1\rangle$ gives the results of Table~\ref{tab:npa}.

\begin{table}[htbp]
\caption{The relaxation reproduces Tsirelson's bound $2\sqrt{2}$~\cite{cirelson1980quantum} at three monomial sets, which validates the implementation. The dimension is the side of the moment matrix, and the unknowns are the number of distinct moments. Solved with SCS~\cite{odonoghue2016conic} at $\varepsilon = 10^{-9}$ on one core.}
\label{tab:npa}
\begin{ruledtabular}
\begin{tabular}{lcccc}
Level & Dimension & Unknowns & $S \leqslant$ & Solve time \\
\colrule
$1$      & 5  & 13 & 2.828427 & 0.02 s \\
$1{+}AB$ & 9  & 25 & 2.828427 & 0.03 s \\
$2$      & 13 & 41 & 2.828427 & 0.05 s \\
\end{tabular}
\end{ruledtabular}
\end{table}

The quantity of relevance here is the device-independent limit on what the eavesdropper learns, and the Tsirelson bound serves only as a check. Her guessing probability for Alice's outcome on input $x = 0$ is computed at a fixed observed CHSH value in the standard formulation, in which the observed behaviour is decomposed into two subnormalised behaviours, one per guess, each required to carry its own positive semidefinite moment matrix, with the two summing to a behaviour of CHSH value $S$. Maximising the success probability over that decomposition returns the values of Table~\ref{tab:pguess}, which agree with the analytical $P_{\mathrm{guess}}(S) = \tfrac{1}{2} + \tfrac{1}{2}\sqrt{2 - S^2/4}$ to one part in $10^{10}$.

\begin{table}[htbp]
\caption{Device-independent guessing probability at level $1{+}AB$, with two moment-matrix blocks of dimension nine. Agreement with the closed form holds at the solver tolerance, which validates the implementation.}
\label{tab:pguess}
\begin{ruledtabular}
\begin{tabular}{ccccc}
$S$ & $P_{\mathrm{guess}} \leqslant$ & Analytical & Residual & Solve time \\
\colrule
2.828 & 0.51229 & 0.51229 & $6.9 \times 10^{-10}$ & 0.25 s \\
2.750 & 0.66536 & 0.66536 & $1.9 \times 10^{-12}$ & 0.05 s \\
2.600 & 0.77839 & 0.77839 & $7.3 \times 10^{-12}$ & 0.08 s \\
2.444 & 0.85592 & 0.85592 & $8.8 \times 10^{-11}$ & 0.04 s \\
2.200 & 0.94441 & 0.94441 & $5.0 \times 10^{-11}$ & 0.05 s \\
\end{tabular}
\end{ruledtabular}
\end{table}

\subsection{upper bounds and lower bounds}
\label{sec:upper bounds_lower bounds}

Semidefinite relaxations of the Navascu\'es--Pironio--Ac\'in hierarchy bound the eavesdropper's information from above at any Hilbert-space dimension~\cite{navascues2008npa, brown2021npa}, and they return an upper bound without an attack. Converting the guessing probability into the Holevo quantity uses $H(A|E) \geqslant H_{\min}(A|E) = -\log_2 P_{\mathrm{guess}}$, so that $\Holevo = 1 - H(A|E) \leqslant 1 + \log_2 P_{\mathrm{guess}}$. At the operational threshold $S = \tau_S$ the guessing-probability relaxation gives $\Holevo \leqslant 0.776$ and the Ac\'in envelope, restricted to collective attacks, gives $\Holevo \leqslant 0.607$, whereas the attacks constructed in Sec.~\ref{sec:libraries} attain $\Holevo \approx 0.350$ averaged over the operating noise (Table~\ref{tab:bound}). The min-entropy upper bound is looser than the envelope at every $S$, as it must be, since the min-entropy bounds the conditional von Neumann entropy from below whereas the envelope is the tight bound on the von Neumann quantity itself. A tighter device-independent upper bound would follow from the conditional-entropy relaxation of Ref.~\cite{brown2021npa}, which solves a sequence of programmes at one quadrature node each and was not evaluated here. The recomputed envelope at $S = \tau_S$ returns $\Holevo \leqslant 0.6069$, which reproduces the value quoted in the main text from an independent implementation. The interval between the upper bounds and the lower bound arises for the two reasons given in Sec.~\ref{sec:libraries}, the lower bound being averaged over the operating noise and restricted to a three-qubit preparation, and it does not measure looseness in either upper bound.

\begin{table}[htbp]
\caption{Upper and lower bounds on the eavesdropper's Holevo information. The first two columns hold without reference to Hilbert-space dimension, the first through the min-entropy and the second as the tight collective-attack bound on the von Neumann quantity, and the third is exhibited by a circuit. The attained value is averaged over the operating noise distribution, which is why it sits furthest below.}
\label{tab:bound}
\begin{ruledtabular}
\begin{tabular}{cccc}
$S$ & NPA via min-entropy & Ac\'in envelope & Attained by construction \\
\colrule
2.828 & 0.0350 & 0.0021 & --- \\
2.600 & 0.6386 & 0.4184 & --- \\
2.444 & 0.7755 & 0.6069 & $\approx 0.350$ \\
\end{tabular}
\end{ruledtabular}
\end{table}

\subsection{Distribution of computational cost}
\label{sm:cost}

The main text records that a relaxation is the cheaper instrument for the static device-independent quantity, and that comparison rests on direct measurement. Enlarging the Bell scenario in the direction that enlarges the moment matrix, namely the number of measurement settings $m$ per party, gives Table~\ref{tab:scaling}. Solve time grows as $m^{4.7}$ and the number of distinct moments as $m^{4.3}$ at level two, so the relaxation does become expensive, and at eight settings it reaches $42.6$~s, which is parity with a single operating point of the circuit search.

\begin{table}[htbp]
\caption{Growth of the relaxation with the number of measurement settings per party, at two monomial sets. Dimension is the side of the moment matrix. The circuit search costs $43$--$49$~s per operating point independently of $m$, since it optimises over circuits and not over correlations.}
\label{tab:scaling}
\begin{ruledtabular}
\begin{tabular}{ccccccc}
 & \multicolumn{3}{c}{level $1{+}AB$} & \multicolumn{3}{c}{level $2$} \\
$m$ & dim & moments & time & dim & moments & time \\
\colrule
2 & 9  & 25   & 0.10 s & 13  & 41    & 0.06 s \\
3 & 16 & 100  & 0.14 s & 28  & 244   & 0.24 s \\
4 & 25 & 289  & 0.33 s & 49  & 865   & 0.80 s \\
5 & 36 & 676  & 0.59 s & 76  & 2276  & 2.17 s \\
6 & 49 & 1369 & 1.20 s & 109 & 4969  & 8.66 s \\
8 & 81 & 4225 & 4.78 s & 193 & 16769 & 42.64 s \\
\end{tabular}
\end{ruledtabular}
\end{table}

These figures support parity and nothing stronger. In the CHSH scenario studied here the relaxation is three orders of magnitude cheaper than the search, and a crossover obtained by growing the scenario until the convex problem becomes hard is not an argument for the search. Two observations remain after that qualification. The scaling directions differ, since the relaxation grows with the complexity of the correlation whereas the search grows with the ancilla dimension and the circuit depth, so the two are cheap in different places. In addition, the search returns an explicit attack, an object the relaxation does not produce, and the sequential problem of Sec.~\ref{sec:drift_formulation} requires such objects as its actions.

\subsection{Comparison with device-independent upper bounds}

Two families of upper bounds are established for CHSH-based device-independent key rates. Arnon-Friedman and Leditzky~\cite{arnonfriedman2021upper} bound the rate in terms of the observed violation alone, using entropic and entanglement-measure arguments that do not refer to any particular attack, which yields a bound holding uniformly over all devices producing that violation. {\L}ukanowski, Balanz\'o-Juand\'o, Farkas, Ac\'in and Ko{\l}ody\'nski~\cite{lukanowski2023convex} instead construct an explicit adversary, decomposing the observed correlation as a convex mixture of a local part, which the eavesdropper prepares and therefore knows completely, and a nonlocal remainder, so that the key rate is bounded by the weight of the local component. The related result of Ref.~\cite{farkas2021bell} uses the same machinery to show that Bell nonlocality alone does not suffice for security of the standard protocols.

Both are upper bounds, and the present construction supplies a lower bound beneath them, being closest in approach to the second family while differing in the quantity optimised. A convex-combination attack fixes the \emph{form} of the adversary, a local-deterministic mixture, and optimises the decomposition weight, whereas the search reported here fixes nothing about the form and optimises over circuits realising the strategy. The two are complementary in an operational sense. A convex-combination bound applies to any device reproducing the correlation and is therefore device-independent in the full sense, whereas the present bound applies to an adversary restricted to the stated ancilla dimension and noise model, so it is the weaker statement, although a physically specified attacker attains it. Where the two overlap, the convex-combination bound should be the tighter, and confirming that on shared CHSH values is the natural first check.

Since all three quantities are functions of the CHSH value $S$, the comparison requires no protocol translation. Each library cell yields an upper bound through Eqs.~\eqref{eq:holevo} and~\eqref{eq:upper}, the bound of Ref.~\cite{arnonfriedman2021upper} overlays as a closed-form function of $S$, and the convex-combination bound of Ref.~\cite{lukanowski2023convex} follows from the local weight returned by the linear programme of that work. A crossing at intermediate $S$ would locate correlations for which an explicit quantum strategy of bounded memory improves on the local-mixture adversary. The comparison has not been carried out here, and no claim is made regarding its outcome.

One qualification applies. The attacks reported here are constrained to a fixed ancilla dimension, so the upper bound they certify is not device-independent, and a device outside that specification could leak more. The comparison is therefore diagnostic of how much of the device-independent bound is reachable by a quantum adversary with bounded memory, and a large residual gap at a given $S$ would indicate that attaining the device-independent bound at that point requires a larger adversarial memory than the one granted here, not that the bound is loose.

\section{Benchmarking against numerical key-rate methods}
\label{sm:wlc}

\subsection{Comparison parameters}

The method of Winick, L\"utkenhaus and Coles~\cite{winick2018reliable} is the established route to reliable key rates for device-dependent protocols whose statistics admit no closed form. It bounds the asymptotic collective-attack rate from below by
\begin{equation}
r \;\geqslant\; \min_{\rho_{AB} \in \mathcal{S}}\;
D\!\left(\rho_{AB} \,\|\, \mathcal{Z}_A(\rho_{AB})\right) \;-\; \mathrm{leak_{EC}},
\label{eq:wlc}
\end{equation}
where $\mathcal{S}$ is the set of states compatible with the observed statistics, $\mathcal{Z}_A$ pinches Alice's qubit in the key basis, and $\mathrm{leak_{EC}} = h(Q_Z)$ for one-way reconciliation at the Shannon limit. A practical reconciliation efficiency $f > 1$ scales that term identically on both sides of the comparison, so it shifts both bounds by the same amount and leaves the gap unchanged. The present work bounds the same rate from above by exhibiting an attack. The two become directly comparable once both are given the same input, taken here to be the per-basis error rates $(Q_Z, Q_X)$ with the eavesdropper held responsible for the whole disturbance, which is the standard conservative convention.

\subsection{Evaluation of the lower bound}

Pauli twirling maps any two-qubit state to a Bell-diagonal one without altering the per-basis error rates or raising the key rate, so the minimisation in Eq.~\eqref{eq:wlc} may be restricted to Bell-diagonal states. Writing the Bell weights as $(a,b,c,d)$ for $(\Phi^+, \Phi^-, \Psi^+, \Psi^-)$ gives $Q_Z = c+d$ and $Q_X = b+d$, and pinching Alice's qubit removes the coherences between the two blocks the key basis selects, leaving
\begin{equation}
D\!\left(\rho \,\|\, \mathcal{Z}_A(\rho)\right) = 1 + h(Q_Z) - H(a,b,c,d)
\label{eq:twirled}
\end{equation}
in bits, with $H$ the Shannon entropy of the Bell weights. Subtracting the leakage leaves $r \geqslant 1 - \max H(a,b,c,d)$, a one-dimensional maximisation in the free parameter $d$ carried out on a grid. The optimum sits at $d = Q_Z Q_X$, so the procedure returns $1 - h(Q_Z) - h(Q_X)$, and that identity serves as a check on the numerics without being assumed by them. Each evaluation takes about one second.

\subsection{Evaluation of the upper bound}

The attack search of Sec.~\ref{sm:framework} is run with the feasibility constraint placed on both per-basis rates simultaneously, so that a circuit counts as feasible only when it reproduces $(Q_Z, Q_X)$ to within $2 \times 10^{-3}$, and with the objective set to the Holevo information the eavesdropper holds about the \emph{key-basis} bit alone. Resolving the Holevo quantity per basis is essential here. An attack may hold considerable information about the complementary basis while learning nothing about the key, so averaging the two would credit the eavesdropper with information that never enters the rate and would yield an upper bound below the proved lower bound. The upper bound is then evaluated at the rates the circuit produces, which may differ from the targets within the tolerance, so that what is certified is a statement about the statistics the circuit presents.

\begin{table}[htbp]
\caption{Constructed attacks against the reliable numerical lower bound of Ref.~\cite{winick2018reliable} at matched per-basis error rates. Attacks use a two-qubit ancilla and are matched to the per-basis rates shown. $\Holevo(A_Z{:}E)$ is the information the circuit holds about the key-basis bit and $h(Q_X)$ the largest value any attack could hold at that phase error rate. The final columns bound the asymptotic key rate from above, as $1 - h(Q_Z) - \Holevo(A_Z{:}E)$ for the circuit found, and from below by the numerical minimisation.}
\label{tab:wlc-full}
\begin{ruledtabular}
\begin{tabular}{cccccccc}
$Q_Z$ & $Q_X$ & $\Holevo(A_Z{:}E)$ & $h(Q_X)$ & Gates & $r \leqslant$ & $r \geqslant$ & Gap \\
\colrule
0.0501 & 0.0519 & 0.2941 & 0.2944 & 11 & 0.4191 & 0.4188 & $3\times10^{-4}$ \\
0.0684 & 0.0320 & 0.2043 & 0.2043 & 5  & 0.4356 & 0.4356 & $<10^{-4}$ \\
0.0895 & 0.0120 & 0.0938 & 0.0938 & 4  & 0.4716 & 0.4716 & $<10^{-4}$ \\
0.0992 & 0.0020 & 0.0202 & 0.0204 & 6  & 0.5135 & 0.5132 & $3\times10^{-4}$ \\
\end{tabular}
\end{ruledtabular}
\end{table}

\subsection{Convergence diagnostics}

At the two intermediate points, a first pass run at $400$ evaluations per candidate and $18$ generations returned $\Holevo(A_Z{:}E) = 0.1308$ against $h(Q_X) = 0.2040$ and $0.0573$ against $0.0936$, leaving upper bounds above the lower bound by $0.073$ and $0.036$ in rate. Because the lower bound is a proven lower bound, the discrepancy could lie only in the search. Raising the evaluation budget from $400$ to $900$ per candidate and the outer loop from $18$ to $24$ generations closed both gaps to below $10^{-4}$. The converged runs also returned smaller circuits, of five and four gates against the earlier six and eight, consistent with the budget argument of Sec.~\ref{sm:framework}.

The comparison supplies the practical argument for computing both bounds. A constructive upper bound on its own cannot distinguish an attack that saturates the truth from one that a limited budget has failed to improve. Placed against a proved lower bound, the same number acquires a convergence criterion, since where the gap closes the search has converged and where it remains open the search requires more budget. Neither method supplies that diagnostic alone, and the diagnostic allows a heuristic search to be reported as a bound.

\subsection{Relation to noisy-channel key-rate studies}

A body of recent work computes key rates for specific noisy channels in place of the abstract statistics considered here. Kish \textit{et al.} compare discrete- and continuous-variable protocols under phase noise in the thermal-loss channel~\cite{kish2024comparison}, Mao \textit{et al.} treat passive continuous-variable operation over a turbulent atmospheric link~\cite{mao2022passive}, and Shu \textit{et al.} obtain rates over noisy channels by a testing-state construction~\cite{shu2023testing}. All three compute achievable rates, so all three lie on the same side of the gap as Ref.~\cite{winick2018reliable} and are complementary to the upper bounds reported here. Two obstacles prevent a like-for-like numerical comparison at present. The continuous-variable results are stated for an infinite-dimensional carrier under a Gaussian or near-Gaussian modulation, whereas the search operates on a finite-dimensional ancilla, so the two are not evaluated on a common state space. In addition, the channel models differ, since thermal loss and atmospheric turbulence are not Pauli maps, and although the search extends to non-Pauli noise, as Table~\ref{tab:ad} shows for amplitude damping, it has not been run on either channel. Extending the construction to the thermal-loss channel and placing an upper bound beside the rates of Ref.~\cite{kish2024comparison} at matched excess noise and transmissivity is the natural next comparison, and requires no change to the apparatus beyond the channel model.

\section{Translation of discovered circuits to eavesdropping strategies}
\label{sm:translation}

No property of a discovered attack depends on its expression as a gate sequence. This section provides the translation into the conventional language of channels and measurements, so that the output of the search may be used by methods that take a strategy as input.

\subsection{Circuit representation as an isometry}

The search returns a unitary $U_E(T, \theta)$ acting on the travelling qubit together with an ancilla initialised in $\ket{0}^{\otimes n_a}$. Fixing the ancilla input defines an isometry
\begin{equation}
V: \mathcal{H}_A \to \mathcal{H}_B \otimes \mathcal{H}_E,
\qquad
V\ket{\psi} = U_E\!\left(\ket{\psi} \otimes \ket{0}^{\otimes n_a}\right),
\label{eq:isometry}
\end{equation}
with $V^\dagger V = \mathbb{I}$. The map $V$ is the Stinespring dilation~\cite{stinespring1955positive} of the channel Bob sees, and the eavesdropper's strategy is fully specified by $V$, the gate decomposition being one factorisation among many and carrying no physical content.

\subsection{Kraus operators and induced channels}

Choosing a basis $\{\ket{e_k}\}_{k=1}^{d_E}$ for $\mathcal{H}_E$, with $d_E = 2^{n_a}$, gives Kraus operators
\begin{equation}
K_k = \left(\mathbb{I}_B \otimes \bra{e_k}\right) V ,
\qquad \sum_k K_k^\dagger K_k = \mathbb{I},
\label{eq:kraus}
\end{equation}
each a $2 \times 2$ matrix. Bob's channel is $\mathcal{N}(\rho) = \sum_k K_k \rho K_k^\dagger$ and the complementary channel carrying the attacker's side information is $\mathcal{N}^c(\rho) = \sum_{k,l} \Tr\!\left(K_l^\dagger K_k \rho\right) \ket{e_k}\!\bra{e_l}$. For a two-qubit ancilla this gives four Kraus operators of size $2 \times 2$, a description of at most $32$ real parameters that replaces the gate list entirely. The composition with the physical noise process is $\Lambda \circ \mathcal{N}$ on Bob's side, while the attacker's marginal is $\mathcal{N}^c$ alone, which is the formal statement of the decoupling used throughout Secs.~\ref{sec:setting} and~\ref{sm:sweeps}. The BB84 drift library of Sec.~\ref{sec:libraries} places the channel before the interaction instead, $\mathcal{N} \circ \Lambda$, and there the attacker's marginal depends on the noise. Because the algorithmic architecture treats the ancilla size $n_a$ as an initialisation parameter, the framework accommodates a larger adversarial dimension $d_E$ without structural modification of the solver.

\subsection{Measurements and resulting correlations}

The strategy is completed by the measurement the attacker performs after the public announcement. For a prepare-and-measure protocol the announcement of basis $b$ selects a two-outcome measurement $\{M_0^{(b)}, M_1^{(b)}\}$ on $\mathcal{H}_E$, and the optimal choice is the Helstrom measurement~\cite{helstrom1969quantum} for discriminating $\rho_E^{(0)} = \mathcal{N}^c(\ket{\psi_0^b}\!\bra{\psi_0^b})$ from $\rho_E^{(1)}$, namely the projector onto the positive part of $\rho_E^{(0)} - \rho_E^{(1)}$. Her success probability is then given by Eq.~\eqref{eq:helstrom}, which is how the main text's figure of merit arises without reference to any circuit.

For the entanglement-based case, the same isometry acts on one half of the shared pair, the correlation is $p(ab|xy) = \Tr\!\left[\left(A_{a|x} \otimes B_{b|y}\right)\rho_{AB}\right]$ with $\rho_{AB}$ the post-interaction reduced state, and the attacker retains $\rho_E^{ab|xy}$ as classical-quantum side information. A CHSH value and a Holevo quantity follow from these by the Born rule alone. The object handed to a relaxation is thus the triple $(\rho_{AB}, \{A_{a|x}\}, \{B_{b|y}\})$ together with the Kraus set of Eq.~\eqref{eq:kraus}, all in the standard form.

\subsection{Application as an ansatz}
\label{sec:npa_integration}

A discovered circuit, translated in this way, is a feasible point for the conventional methods, and three applications follow from the translation. Firstly, a see-saw optimisation over states and measurements requires an initial point, and a discovered circuit supplies one that already meets the constraint, which is material because see-saw optimisation is sensitive to initialisation. Secondly, a lower-bounding computation of the conditional entropy requires a candidate attack against which to check its output, and any separation between the certified bound and the value of an explicit attack localises the region in which the relaxation is loose. Thirdly, a feasibility question of the form ``is this correlation attainable at this noise level with an ancilla of this size'' is answered affirmatively by exhibiting the circuit, the isometry constituting an explicit witness. In each case the search contributes structure and the relaxation contributes certification.

\section{Drift environments, reference policies and reproducibility}
\label{sm:drift}

\subsection{Environment specification}

Both environments are episodic with $T = 50$ rounds. The noise level is initialised from $\mathcal{N}(\mu, \sigma^2)$ and advanced by the exact one-round update of Eq.~\eqref{eq:ou}, $p_{t+1} = \mu + (p_t - \mu)e^{-\kappa} + \sigma\sqrt{(1 - e^{-2\kappa})/2\kappa}\,\xi_t$ with $\xi_t \sim \mathcal{N}(0,1)$, and clipped to the ranges of Table~\ref{tab:mapping}. For each library circuit the single-round statistic and reward are tabulated on a grid in $p$ and interpolated linearly. Detection is declared at the end of an episode if the mean of the single-round statistic over its $50$ rounds lies below the threshold. The observation comprises the current $p_t$, the three preceding values, the previous action, the mean statistic over the last ten rounds, the mean reward so far and the fraction of rounds remaining. The feasibility mask admits the circuits whose statistic, appended to the preceding nine rounds, keeps the ten-round mean above the threshold, with a slack of $10^{-3}$ on E91; the greedy oracle picks the admissible circuit of largest reward under the same rule.

In the BB84 environment the bit-flip channel acts on Alice's state before the attacker's two-qubit unitary on the travelling qubit and one ancilla, following Ref.~\cite{decker2024qkd}; $\FAB$ is the mean over the four sifted states of Bob's matched-basis recovery probability, which gives $1 - p/2$ for the passive attacker, and $\FAE$ is the corresponding probability for the ancilla measured in the announced basis. The noise-indexed conservative lookup plays the most conservative of the four library levels at the anchor nearest $p_t$, and so selects among six circuits. In the E91 environment the attacker prepares a three-qubit state on $A$, $B$ and $E$, the channel $\Lambda_p^A \otimes \Lambda_p^B$ acts on it, $S$ is evaluated at the settings stored with each cell, and $\Holevo$ is computed for Alice's $Z$ outcome.

\subsection{Checkpoint selection and spread}

Agent values in Table~\ref{tab:results} are those of the best checkpoint on evaluation. On BB84 the three best strict checkpoints reach $0.7434$, $0.7433$ and $0.7432$, all without detection in $50$ episodes, and the three best tolerated checkpoints $0.7477$, $0.7463$ and $0.7462$ at $4\%$, $14\%$ and $4\%$ detection. A retraining of the strict agent for $5 \times 10^{5}$ steps in a refactored copy of the environment reached $0.7348$ at $2\%$ detection, and a policy-gradient agent $0.6687$ without detection, so the match between the agent and the greedy oracle on BB84 is a best-case outcome of training, whereas the ordering of Table~\ref{tab:results} is robust to it.

\subsection{Values computed for this work}

The E91 rows marked $^\dagger$ in Table~\ref{tab:results} were computed from the released E91 library with an independent reimplementation of the environment, which reproduces the stored $(S, \Holevo)$ value of every library cell to $10^{-6}$. On $2000$ episodes (seeds $1000$--$2999$), each of the $27$ actions was played in every round, and the best circuit without detection in any episode is the ``safe'' cell anchored at $p = 0.0642$, with $\Holevo = 0.135$; the next is the ``safer'' cell anchored at $p = 0.0428$, with $0.125$. The greedy oracle reaches $0.345$ on the same episodes. The zero-detection upper bound was obtained by backward induction over $(t, p_t, E_t)$ on $100$ noise levels with a Gaussian transition matrix and a grid of spacing $0.002$ in the accumulated excess $E_t$, with terminal value $-\infty$ for $E_T < 0$. Rounding $E_{t+1}$ up to the grid, which favours the attacker, gives $0.3575$, and rounding it down gives $0.3550$, so $V_{\mathrm{DP}} = 0.356 \pm 0.001$. The dynamic-programming run of the released experiments discretised the accumulated statistic far more coarsely, which is why it both falls below this upper bound and incurs detection. For BB84 the released dynamic programme uses $100$ noise levels and $2000$ bins of accumulated $\FAB$ and incurred no detection in $200$ evaluation episodes, and its value $0.7527$ is reported unchanged. The best genuinely fixed BB84 circuit could not be evaluated, because the library files of that environment are not in the released artefact; the accompanying script \texttt{best\_fixed\_and\_dp.py} computes it, together with the E91 values above, once they are supplied.

\section{Numerical cost and reproducibility}
\label{sm:compute}

All results were produced on commodity CPU hardware without accelerators, the inner-loop arithmetic being dominated by density-matrix simulation of two to four qubits. The BB84 library required roughly $20$ CPU-hours across $24$ cells, the E91 library $60$--$100$ CPU-hours across $26$ cells, and adaptive-attack training $30$--$50$ CPU-hours per protocol across the full policy comparison, for a total near $150$--$250$ CPU-hours. The channel sweeps of Sec.~\ref{sm:sweeps}, including the per-basis repetition of Sec.~\ref{sm:perbasis}, add about $0.6$ CPU-hours, and the semidefinite relaxations of Sec.~\ref{sm:npa} add a few seconds in total. Peak memory stayed below $2$~GB.

The released artefact contains the search framework, the BB84 and E91/DIQKD simulation harnesses, the sweep scripts, the key-rate and semidefinite-relaxation scripts, the training and evaluation scripts, the analysis notebooks, the random seeds, an environment specification and a top-level script regenerating every reported value from stored results.

\section{Scope and dual-use considerations}
\label{sm:scope}

The circuits reported here are attack strategies on QKD protocols, published openly in accordance with the disclosure norms of academic cryptography. Three considerations limit the realistic risk. Firstly, the experiments run at the analytical reference scale of the public literature, with two-qubit ancillas, idealised Pauli, amplitude-damping and bilateral depolarising channels, and targets that are published bounds~\cite{cerf2000cloning, fuchs1997optimal, bruss1998optimal, acin2007diqkd}, and no result exceeds those bounds, so the information-theoretic upper bound constraining a deployed attacker is untouched. Secondly, the distance between a simulated attack in idealised noise and an attack on a fielded link is dominated by physical-layer factors not modelled here, among them detector control, finite-key statistics, decoy-state calibration, source imperfections and authenticated post-processing, which is historically where operationally usable attacks have originated. Thirdly, adaptive eavesdropping under channel drift is the threat model that informs how users choose abort thresholds and recalibration intervals, so characterising what such an attacker extracts serves the defender. The release contains no hardware drivers, side-channel primitives or post-processing attack code.

\section{Note on concurrent work}
\label{sm:concurrent}

A preliminary version of this work was presented at QCrypt 2026. During the preparation of this manuscript, we also became aware of independent, concurrent work by Bagunu (2026) \cite{bagunu2026adaptive}. While both studies model adaptive eavesdropping in quantum key distribution as a sequential decision process, their scopes and methodologies differ significantly. Bagunu employs tabular reinforcement learning to modulate a fixed intercept-resend attack against a stationary channel. In contrast, our approach addresses a more advanced threat model: our deep reinforcement learning agent dynamically selects from a library of variationally constructed quantum circuits (where gate topology and rotation angles are jointly optimised) to exploit a continuously drifting channel.

\end{document}